\theoremstyle{plain}
\newtheorem{thm}{Theorem}
\newtheorem{lemma}{Lemma}
\newtheorem{corollary}{Corollary}
\begin{document}

\title{Multitaper Analysis of Evolutionary Spectra from Multivariate Spiking Observations}

\author{Anuththara Rupasinghe and Behtash Babadi \thanks{The authors are with the Department of Electrical \& Computer Engineering, University of Maryland, College Park, MD 20742. E-mails: \{raar, behtash\}@umd.edu. This work has been supported in part by the National Science Foundation Award No. 1807216 and the National Institutes of Health award No. 1U19NS107464-01. This work was presented in part at the IEEE Data Science Workshop, Minneapolis, MN, USA, June 2--5, 2019 \cite{Rupasinghe}.} 
\vspace{-3mm}}

\maketitle

\begin{abstract}
Extracting the spectral representations of the neural processes that underlie spiking activity is key to understanding how the brain rhythms mediate cognitive functions. While spectral estimation of continuous time-series is well studied, inferring the spectral representation of latent non-stationary processes based on spiking observations is a challenging problem. In this paper, we address this issue by developing a multitaper spectral estimation methodology that can be directly applied to multivariate spiking observations in order to extract the evolutionary spectral density of the latent non-stationary processes that drive spiking activity, based on point process theory. We establish theoretical bounds on the bias-variance trade-off of the proposed estimator. Finally, we compare the performance of our proposed technique with existing methods using simulation studies and application to real data, which reveal significant gains in terms of the bias-variance trade-off.
\end{abstract}

% no keywords
\begin{IEEEkeywords}
Point process model, multivariate non stationary latent process, evolutionary spectral density matrix, multitaper analysis, binary spiking observations  
\end{IEEEkeywords}

\IEEEpeerreviewmaketitle

\section{Introduction}
\IEEEPARstart    
Neural oscillations are known to play a significant role in mediating the cognitive and motor functions of the brain \cite{Schnitzler2005NormalAP, Uhlhaas_Singer, WARD2003553}. The advent of high-density electrophysiology recordings \cite{Vertes_Stackman,Jun_2017, Du_2011} from multiple locations in the brain has opened a unique window of opportunity to probe these oscillations at the neuronal scale. In order to exploit such experimental data for inferring the mechanisms of brain function, spectral analysis techniques tailored for such neuronal spiking data are required \cite{Brown_Kass_2004}.

Existing techniques for spectral analysis of neuronal data use point process theory \cite{Brown_Kass_2004, Truccolo_2005, Paninski_2004} to capture the spiking statistics. Due to the time-domain smoothing procedures used by existing techniques \cite{Smith,lewis2013local,Halliday1999} for recovering the latent processes that drive spiking activity, the resulting power spectral density (PSD) estimates undergo distortion in the spectral domain. Alternative methods to directly estimate the PSD from spiking data have recently been proposed in \cite{Miran,DasPreprint}. 

These existing methods consider univariate spiking observations and assume the latent process to be second-order stationary during the observation period. However, it is known that the brain oscillations that drive neuronal spiking are non-stationary and may exhibit rapid changes corresponding to the brain state or behavioral dynamics \cite{lewis2013local, Grün_2002}. 

Non-stationary time series analysis has been well studied for multivariate continuous signals and various methods have been proposed to quantify the energy-frequency-time distributions {\cite{Huang, Cohen_1995, Priestley_1981, Matz_2006 , Martin_1985, Priestley_1965, Matz_1997, HAMMOND_1996, Das,kim2018state}}. One notable example is the evolutionary power spectral characterization  \cite{Priestley_1965, Matz_1997}, which defines a non-stationary spectral density matrix in order to quantify the local spectral energy distributions at each instant of time for a multivariate process. This characterization has a physical interpretation similar to that of stationary Fourier spectral analysis, and reduces to the classical power spectra when the processes are stationary \cite{Priestley_1965}. A unified approach that considers multivariate spiking observations driven by non-stationary latent processes is lacking, but highly desired due to the emerging demands of modern neuronal data analysis.

In this work, we close this gap by developing a framework to estimate the evolutionary spectral density (ESD) matrix of a multivariate non-stationary latent process, given spiking observations. We model the spiking observations as multiple realizations of point processes with logistic links to the latent continuous processes. We then pose the problem of spectral estimation within a multitapering framework. Multitapering is a widely-used PSD estimation technique with desirable bias-variance trade-off performance \cite{Thomson}, due to the usage of the discrete prolate spheroidal sequences (dpss) \cite{Slepian_1978} as data tapers, which are known for their minimal spectral leakage \cite{percival_walden_1993, Babadi2014ARO, Bronez_1992}.

We employ a state-space model to characterize the dynamics of the evolutionary spectra, with the underlying states pertaining to the eigen-spectra of the multivariate latent processes \cite{Das}. We derive an Expectation-Maximization (EM) algorithm for efficiently computing the maximum a posteriori (MAP) estimate of the latent variables and smoothed states given the spiking observations, which we then use to construct the ESD matrix. We establish theoretical bounds on the bias-variance performance of our proposed estimator, and recover the favorable asymptotic properties of the classical multitaper framework.

We compare the performance of our proposed method to existing techniques through simulation studies and application to experimentally-recorded neuronal data. We present two simulated case studies using non-stationary multivariate autoregressive processes, whose dynamics are inspired by neural oscillations. These studies reveal that the proposed method outperforms two of the widely used methods for deriving spectral representations from spiking data. Finally, we apply our proposed estimator to multi-unit spike and local field potential (LFP) recordings from a human subject undergoing general anesthesia \cite{lewis2013local,Miran}. Our proposed method corroborates existing hypotheses on the relation between the LFP signals and spiking dynamics, by providing a high resolution characterization of the underlying spectrotemporal couplings.

The rest of the paper is organized as follows: We present our problem formulation in Section \ref{sec:prob}, followed by the proposed estimation framework in Section \ref{sec:estimate}. We provide our theoretical results on the bias-variance performance of the proposed estimator in Section \ref{sec:analysis}. Our simulation studies are presented in Section \ref{sec:sim}, followed by application to experimentally-recorded data in Section \ref{sec:data}. Finally, we close the paper by our concluding remarks in Section \ref{sec:con}.

\section{Problem Formulation}\label{sec:prob}

Let $N(t)$ and $H(t)$ denote the point process representing the number of spikes and spiking history of a neuron in $[0,t)$, respectively, where $t  \in [0,T]$ and $T$ denotes the observation duration. The Conditional Intensity Function (CIF) \cite{Daley} of a point process $N(t)$ is defined as:
\begin{equation}
\lambda(t|H_t) := \lim_{\Delta \rightarrow 0} \frac{P[N(t+\Delta)-N(t) = 1|H_t]}{\Delta}.
\label{eq1}
\end{equation}  
To discretize the continuous process, we consider time bins of length $\Delta$, small enough that the probability of having two or more spikes in an interval of length $\Delta$ is negligible. Thus, the discretized point process can be modeled by a Bernoulli process with success probability $\lambda_k : = \lambda(k\Delta|H_{k}) \Delta$, for $1 \le k \le K$, where $K := T/\Delta$ is an integer (with no loss of generality). We refer to $\lambda_k$ as CIF hereafter for brevity.

In a similar fashion, we consider spiking observations from an ensemble of $J$ neurons, with CIFs $\{\lambda_{k,j}\}_{k = 1}^{K}$, for $j=1,2,\cdots,J$. Suppose that for each neuron, $L$ independent realizations of the spiking activity are observed. The collection of the binary spiking observations are represented as $\{n_{k,j}^{(l)}\}_{l, k, j = 1}^{L, K, J}$. We model the $j^{\sf th}$ CIF by a logistic link to a latent random process, $\mathbf{x}_j = [x_{1,j}, x_{2,j},\cdots,x_{K,j}]^{\top}$, which needs not be stationary in general.  Thus, the matrix $\mathbf{X} = [\mathbf{x}_{1}, \; \mathbf{x}_{2}, \cdots, \mathbf{x}_{J}]_{K \times J}$ represents a $J$-variate random process. 

Accordingly, for $1 \leq j \leq J, \; 1 \leq k \leq K$ and $1 \leq l \leq L$, we have 
\begin{equation}\label{eq:logistic}
n_{k,j}^{(l)} \sim \operatorname{Bernoulli}(\lambda_{k,j}),
\end{equation}
where $\lambda_{k,j} = \operatorname{logistic} (x_{k,j}) = 1 / (1 + \exp{(-x_{k,j})})$. Our goal is to estimate the time variant power spectral density of each process $\mathbf{x}_j$, for $1 \leq j \leq J$, and the time variant cross spectra between each pair of processes.  Following the formation of Priestley's Evolutionary Spectra \cite{Priestley_1965}, each random process $x_{k,j}$, with mean $\mu_{k,j}$, will have a representation of the form,
\begin{equation}
x_{k,j} - \mu_{k,j} =  \int_{-\pi}^{\pi} e^{ik\omega} A_{k,j}(\omega) \: dZ_{j}(\omega),
\label{eq2}    
\end{equation}

\noindent where $A_{k,j}(\omega)$ is the time-varying amplitude function and $dZ_{j}(\omega)$ is an orthogonal increment process. To define a discrete-parameter harmonic process, we approximate $Z_{j}(\omega)$ by a jump process over $N$ frequency bins \cite{Miran}, and thereby replace $dZ_{j}(\omega_n)$ with $\frac{\pi}{N} (a_{j,n} + i b_{j,n})$, at $\omega_{n} = n \pi / N, 1 \leq n \leq N-1$. Given the random process is real, using the symmetry $Z_{j}(\omega) = Z_{j}(-\omega)$, we express the discretization of Eq. (\ref{eq2}) as

\begin{equation}
\nonumber    x_{k,j} = \mu_{k,j} + \frac{2\pi}{N} \sum_{n = 1}^{N - 1} (c_{k,j,n} \cos(\omega_n k) - d_{k,j,n} \sin(\omega_n k)),
\label{eq3}
\end{equation}
where $c_{k,j,n} =  A_{k,j}(\omega_n) \:a_{j,n}$ and $d_{k,j,n} =  A_{k,j}(\omega_n) \:b_{j,n}$ are real-valued random variables. Further, the evolutionary spectrum \cite{Priestley_1965} at time $k$ is defined as 
\begin{equation}
\nonumber    \psi_{k,j}(\omega_n) \, d\omega_n = |A_{k,j}(\omega_n)|^2 \: \mathbb{E} |dZ_{j}(\omega_n)|^2.
\label{eq4}
\end{equation}

\noindent Hence, according to our model, the ESD function can be approximated by,
\begin{equation}
\nonumber \psi_{k,j}(\omega_n) = \frac{\pi}{N} \: \mathbb{E} [(c_{k,j,n} + i d_{k,j,n})(c_{k,j,n} - i d_{k,j,n})].
\label{eq5}
\end{equation}

The Spectral Density Matrix of a multivariate stationary random process is defined as, $\bm{\Psi}(\omega) = \frac{1}{2\pi} \sum_{\ell = -\infty}^{\infty} e^{-i\ell\omega} \, \bm{\Gamma}(\ell)$, where $\bm{\Gamma}(\cdot)$ is the covariance matrix of the process \cite{Brockwell}. Further, considering a vector valued orthogonal increment process $\mathbf{Z}(\omega) = [Z_{1}(\omega), Z_{2}(\omega), \cdots , Z_{J}(\omega)]^{\top}$, the spectral density matrix is characterized by $\bm{\Psi} (\omega) d\omega = \mathbb{E}[d\mathbf{Z}(\omega) d\mathbf{Z}(\omega)^{H}]$. We extend this to the evolutionary spectra, and formulate the ESD matrix at time $k$ and frequency $\omega_n$ for our model as,  
\begin{equation}
\nonumber \bm{\Psi}_k(\omega_n) = \frac{\pi}{N} \: \mathbb{E} [(\mathbf{c}_{k,n} + i \mathbf{d}_{k,n})(\mathbf{c}_{k,n} - i \mathbf{d}_{k,n})^\top], 
\label{eq6}
\end{equation}
where $\mathbf{c}_{k,n} = [c_{k,1,n} , \, c_{k,2,n},  \cdots , c_{k,J,n}]^{\top}$ and $\mathbf{d}_{k,n} = [d_{k,1,n} , \, d_{k,2,n}, \cdots , d_{k,J,n}]^{\top}$.

Further, we assume the processes to be quasi-stationary \cite{Das}, the $J$-variate random process to be jointly stationary in windows of length $W$. The total data duration, $K$ is divided into $M$ non overlapping segments of length $W$, with $K = MW$. Thus, the vector process $[x_{k,1}, \, x_{k,2}, \cdots , x_{k,J}]$ is assumed to be jointly stationary for $(m-1)W + 1 \leq k \leq mW$, $ 1 \leq m \leq M$. Simplifying the previous model under this quasi-stationarity assumption, for $(m-1)W + 1 \leq k \leq mW$, $ 1 \leq m \leq M$ we get,  
\begin{equation}
\nonumber  x_{k,j} = \mu_{m,j} + \frac{2\pi}{N} \sum_{n = 1}^{N - 1} (p_{m,j,n} \cos(\omega_n k) - q_{m,j,n} \sin(\omega_n k)),
\label{eq7}
\end{equation}
where $p_{m,j,n}$ and $q_{m,j,n}$ are real-valued random variables.

Defining $\mathbf{X}_{m,j} = [x_{(m-1)W + 1,j}, x_{(m-1)W + 2,j},\cdots, $ $ x_{mW,j}]^{\top}$, $\mathbf{v}_{m,j} = [\frac{N}{2\pi} \mu_{m,j} ,\; p_{m,j,1}, \; q_{m,j,1}, \cdots, p_{m,j,N-1},$ $ q_{m,j,N-1}]^{\top}$, $\Bar{\mathbf{X}}_m = [\mathbf{X}_{m,1}, \mathbf{X}_{m,2}, \cdots ,\mathbf{X}_{m,J}]_{W \times J}$, $\mathbf{V}_m = [\mathbf{v}_{m,1},  \mathbf{v}_{m,2}, \cdots, \mathbf{v}_{m,J}]_{(2N-1) \times J}$ and $\mathbf{A}_m$ as in Eq. (\ref{eq8}), we get  $\mathbf{X}_{m,j} = \mathbf{A}_m \mathbf{v}_{m,j}$ and $\Bar{\mathbf{X}}_m = \mathbf{A}_m \mathbf{V}_m$. Further, we define $\mathbf{p}_{m,n} = [p_{m,1,n} , p_{m,2,n} , \cdots , p_{m,J,n}]^{\top}$, $\mathbf{q}_{m,n} = [q_{m,1,n} , q_{m,2,n} , \cdots , q_{m,J,n}]^{\top}$, $\mathbf{w}_{m,0} =[\mu_{m,1} , \mu_{m,2} , \cdots , \mu_{m,J}]^{\top}$, $\mathbf{w}_{m,n} = [\mathbf{p}_{m,n}^{\top} \,, \, \mathbf{q}_{m,n}^{\top}]^{\top}_{2J \times 1}$ for $1 \leq n \leq N-1$ and $\mathbf{w}_{m} = [\mathbf{w}_{m,0}^{\top} , \mathbf{w}_{m,1}^{\top} , \cdots , \mathbf{w}_{m,N-1}^{\top}]^{\top}_{J(2N-1) \times 1}$. Note that $\mathbf{w}_{m}$ is the vectorization of the matrix $\mathbf{V}_m$ and both are equivalent representations, for the discrete parameter harmonic process that drives the spiking observations in the time window $m$.
\begin{figure*}[!ht]
%{\tiny
\begin{align}
\mathbf{A}_m  := \frac{2\pi}{N} 
\resizebox{0.85\textwidth}{!}{$\begin{bmatrix}
1 & \cos\left(\frac{(m-1)W+1}{N} \pi \right) & -\sin\left(\frac{(m-1)W+1}{N}\pi\right) & \dots & \cos\left(\frac{(N-1)((m-1)W+1)}{N} \pi\right) &-\sin\left(\frac{(N-1)((m-1)W+1)}{N} \pi\right) \\
1 & \cos\left(\frac{(m-1)W+2}{N}\pi \right) & -\sin\left(\frac{(m-1)W+2}{N} \pi\right) & \dots & \cos\left(\frac{(N-1)((m-1)W+2)}{N}\pi \right) &-\sin\left(\frac{(N-1)((m-1)W+2)}{N}\pi\right) \\
\vdots & \vdots & \vdots & \ddots & \vdots & \vdots\\
1 & \cos\left(\frac{mW}{N}\pi \right) & -\sin\left(\frac{mW}{N}\pi \right) & \dots & \cos\left(\frac{(N-1)mW}{N} \pi \right) &-\sin\left(\frac{(N-1)mW}{N}\pi \right) \\
\end{bmatrix}$}.
\label{eq8}
\end{align}
\vspace{-4mm}
\end{figure*} 

Accordingly, the ESD matrix,    
\begin{equation}
\bm{\Psi}_m(\omega_n) = \frac{\pi}{N} \: \mathbb{E} [(\mathbf{p}_{m,n} + i \mathbf{q}_{m,n})(\mathbf{p}_{m,n} - i \mathbf{q}_{m,n})^\top],
\label{eq9}
\end{equation}
can be determined if $\mathbb{E}[\mathbf{w}_{m,n} \mathbf{w}_{m,n}^{\top}]$ is estimated for $1 \leq n \leq N-1$. Thus, the task of determining the evolutionary power spectra of the $J$-variate random process can be reduced to computing $\mathbb{E}[\mathbf{w}_{m} \mathbf{w}_{m}^\top]$, for $m=1,2,\cdots,M$, given the ensemble of spiking data $\mathcal{D} =  \{n_{k,j}^{(l)}\}_{l, k, j = 1}^{L, K, J}$. 

\section{Multitaper Estimate of the Spectral Density Matrix}\label{sec:estimate}

\subsection{The Multitaper Framework}\label{proposed_estimator}

Multitapering is a technique widely used in power spectral estimation of stationary random processes, to overcome bias and variance limitations of conventional Fourier analysis \cite{Thomson}. The multitaper spectral estimate of a stationary process, $x_1,x_2, \cdots, x_{K}$ is defined as,

\vspace{-3mm}
\begin{equation}
\begin{split}
S^{\sf mt}(\omega) \quad 
&  = \quad \frac{1}{P} \sum_{p = 1}^{P} \, |y^{(p)}(\omega)|^2 \quad \text{with}, \\
y^{(p)}(\omega) \quad 
& =  \quad  \sum_{k = 1}^{K} \nu_k^{(p)} x_k \, e^{-i \omega k},
\end{split}
\label{eq18}
\end{equation}
\vspace{-2mm}

\noindent where $\nu_k^{(p)}$ is the $k^{\sf th}$ time sample of the $p^{\sf th}$ discrete prolate spheroidal sequence (dpss) \cite{Slepian_1978}, for $ 1 \leq p \leq P$. The dpss tapers are a set of orthogonal tapers that maximally concentrate their spectral power within a design bandwidth of $[-\frac{\xi f_s}{K},\frac{\xi f_s}{K}]$, for some positive constant $\xi$. While originally designed for univariate time series, multitapering has been likewise extended to multivariate time series \cite{Walden}. Given a second order jointly stationary $J$-dimensional random process $\mathbf{x}_1, \mathbf{x}_2, \cdots , \mathbf{x}_{K}$, where, $\mathbf{x}_k = [x_{k,1}, x_{k,2}, \cdots, x_{k,J}]$, the multitaper cross-spectral estimate between the $r^{\sf th}$ process and the $t^{\sf th}$ process $(r,t \in \{1, 2, \cdots, J\})$ has been defined as,  
\vspace{-2mm}
\begin{equation}
\begin{split}
S^{\sf mt}_{r,t}(\omega) \quad 
& = \quad \frac{1}{P} \sum_{p = 1}^{P} \, y_{r}^{(p)}(\omega)( y_{t}^{(p)}(\omega))^* \quad \text{where,} \\
y_{r}^{(p)}(\omega) \quad  
& = \quad  \sum_{k = 1}^{K} \nu_k^{(p)} x_{k,r} \, e^{-i \omega k}. \quad \quad  \quad \quad \quad 
\end{split}
\label{eq19}
\end{equation}
\vspace{-2mm}

The spectral estimate $y_{r}^{(p)}(\omega)( y_{t}^{(p)}(\omega))^*$ corresponding to the $p^{\sf th}$ taper is referred to as the $p^{\sf th}$ eigen-spectral estimate. Given that our observed data are binary random variables parameterized by a function of $\mathbf{X}$ (and not the actual process $\mathbf{X}$), determining the multitaper spectral estimate is not straightforward. However, note that the data log-likelihood,

\vspace{-2mm}
\begin{align}
 \nonumber \log \, f(\mathcal{D} | \mathbf{X}) & = \sum_{j = 1}^{J} \sum_{k = 1}^{K} \sum_{l = 1}^{L} \Big\{n_{k,j}^{(l)} x_{k,j}
- \log \, (1 + \exp(x_{k,j})\Big\} \\
& = \sum_{j = 1}^{J} \sum_{k = 1}^{K} L \Big\{\overline{n}_{k,j} x_{k,j}
\!-\! \log (1 \!+\! \exp(x_{k,j})\Big\}
\label{eq54}
\end{align}
\vspace{-2mm}

\noindent depends on the observations only through the ensemble mean, $\overline{n}_{k,j} := \frac{1}{L} \sum_{l = 1}^{L} n_{k,j}^{(l)}$, resulting $\overline{n}_{k,j}$ to be a sufficient statistic.  Thus, if one could characterize the effect of tapering the time series on the ensemble mean, it would be possible to form the multitaper spectral estimate of the unobserved time series. 

Given that $n_{k,j}^{(l)} \sim \operatorname{Bernoulli}(\lambda_{k,j})$, with $\lambda_{k,j} = \operatorname{logistic}(x_{k,j}) := 1 / (1 + \exp(-x_{k,j})) $, we have $ x_{k,j} = \operatorname{logit}(\lambda_{k,j}) := \log(\lambda_{k,j} / (1 - \lambda_{k,j}))$ and, 
\begin{equation}
\begin{split}
\nonumber \overline{n}_{k,j} \, =\, \frac{1}{L}\sum_{l = 1}^{L} n_{k,j}^{(l)} \quad \rightarrow{} \quad \operatorname{logistic}(x_{k,j}) \quad \quad a.s.
\end{split}
\label{eq20}
\end{equation}
by the strong law of large numbers, for $\overline{n}_{k,j} \neq 0,1$. Further, consider the log-likelihood $\log f( \{n_{k,j}^{(l)}\}_{l = 1}^{L} | x_{k,j} ) = L \Big\{ \overline{n}_{k,j} x_{k,j} - \log \, (1 + \exp(x_{k,j}) \Big\}$, in Eq. (\ref{eq54}). We notice that 
\begin{equation}
\begin{split}
\nonumber \left. {\frac{\partial \log f( \{n_{k,j}^{(l)}\}_{l = 1}^{L} | x_{k,j} )}{\partial  x_{k,j}}}
 \right|_{ x_{k,j} = \operatorname{logit}(\overline{n}_{k,j})} = \quad  0,
\end{split}
\label{eq55}
\end{equation}
which implies that $\operatorname{logit}(\overline{n}_{k,j})$ is the maximum likelihood estimator of $x_{k,j}$. Thus, we take $\operatorname{logit}(\overline{n}_{k,j})$ as an estimator of $x_{k,j}$. By extending this argument to the tapered processes $\{\nu_{k \! \! \! \mod \! W + 1} ^{(p)} x_{k,j} \}_{k, j, p=1}^{K,J,P}$, we can similarly take  
\begin{equation}
\begin{split}
(\overline{n}_{k,j})^{(p)} \quad = \quad \operatorname{logistic} \, (\, \nu_{k \! \! \! \! \! \mod \! W + 1} ^{(p)} \, \operatorname{logit}(\overline{n}_{k,j}) \,),
\end{split}
\label{eq22}
\end{equation}
to be the estimator of the ensemble mean that would have been generated if the random process had been filtered by the $p^{\sf th}$ data taper, for $1 \leq k \leq K$. Note that when $\overline{n}_{k,j} = 0$ or $\overline{n}_{k,j} = 1$, the function $\operatorname{logit}(\overline{n}_{k,j})$ is not defined. In such cases, we directly estimate $(\overline{n}_{k,j})^{(p)}$ by $\overline{n}_{k,j}$.

We thus need to compute the evolutionary spectra corresponding to each of the $P$ tapers, and finally derive the multitaper estimate by averaging the $P$ eigen-spectral estimates. In the next subsection, we consider estimating the power spectral density matrix of the untapered process first, and then extend it to the case of $P$ tapers by replacing the ensemble average of spiking data $\{\overline{n}_{k,j}\}$ with the tapered ensemble mean $\{\overline{n}_{k,j}^{(p)}\}$, for $p=1,2,\cdots,P$. 

\subsection{MAP Estimation of the Parameters via the EM Algorithm}\label{sec:map}

In order to capture the evolution of the spectra, we impose a stochastic continuity constraint on the random variables $\mathbf{w}_{m}$, following the discrete state-space model,
\begin{equation}
\mathbf{w}_{m} = \mathbf{\Phi} \mathbf{w}_{m-1} + \bm{\eta}_m 
\label{eq10}
\end{equation}
where the state transition matrix $\mathbf{\Phi}$ is a constant matrix and $\bm{\eta}_m \sim \mathcal{N}(\mathbf{0}, \mathbf{Q}_m)$. We consider the special case where $\mathbf{\Phi}$ takes the form $\alpha \mathbf{I}$, for simplicity. Nevertheless, $\mathbf{\Phi}$ can also be estimated from data within the same Expectation-Maximization framework that follows next \cite{shumway1982approach}.

The parameters to be estimated are $\bm{\theta} := \{ \mathbf{Q}_m, \, 1 \leq m \leq M \}$, and the observations are binary spiking data $\mathcal{D} = \{ n_{k,j}^{(l)}\}_{k,j,l \, = 1}^{K, J, L}$. Considering the variable $\mathbf{V} = \{\mathbf{V}_{m}, \, 1 \leq m \leq M\}$ to be the hidden data, we aim at recovering $\boldsymbol{\theta}$ via a MAP estimation problem. First, the log-likelihood of the complete data has the form
\begin{equation}
\begin{split}
& \log \, f(\mathcal{D}, \mathbf{V} | \bm{\theta})   = \log \, f(\mathcal{D} | \mathbf{V} , \bm{\theta})  \: + \: \log \, f_{\mathbf{V}|\bm{\theta}}(\mathbf{V}| \bm{\theta}) \\
& = \sum_{j = 1}^{J} \sum_{m = 1}^{M}  \sum_{w = 1}^{W} \sum_{l = 1}^{L} \Big\{n_{(m-1)W+w,j}^{(l)} (\mathbf{A}_m \mathbf{V}_m)_{w,j}  \\
& - \log \, (1 + \exp(\mathbf{A}_m \mathbf{V}_m)_{w,j}))\Big\} - \frac{1}{2} \sum_{m = 1}^{M} \, \Big\{ \log |\mathbf{Q}_m| \\
& +  (\mathbf{w}_{m} -  \mathbf{\Phi} \mathbf{w}_{m-1})^\top \mathbf{Q}_m^{-1} (\mathbf{w}_{m} -  \mathbf{\Phi} \mathbf{w}_{m-1}) \Big\} + C_1,
\end{split}
\label{eq11}
\end{equation}
where $ \mathbf{w}_0 = \mathbf{0}$ and $C_1$ represents terms that are not functions of $\mathcal{D}, \mathbf{V}$ or $\bm{\theta}$. Next, considering that neuronal spiking activity is typically sparse in time and the number of observed realizations ($L$) is limited, an appropriate prior distribution on the parameters helps in reducing the estimation variance. We consider a diagonal covariance matrix $\mathbf{Q}_m$, whose $i^{\sf th}$ diagonal entry is denoted by $Q_{m,i}$. Further, we assume $\mathbf{Q}_m$ to be independent and identically distributed for $1 \leq m \leq M$, with a distribution of the form,
\begin{equation}
\begin{split}
\nonumber f(\mathbf{Q}_m) \propto \resizebox{0.82\columnwidth}{!}{$\displaystyle \exp \!\left (\!\!- \rho \sum_{j = 1}^{2J} \sum_{n = 1}^{ N - 2} \Big ( \log (Q_{m, J(2n-1)+j}) \!-\! \log (Q_{m, J(2n+1)+j}) \Big )^2 \!\right )$}.
\end{split}	
\label{eq12}
\end{equation}

This prior distribution encourages continuity in log scale of the spectral estimates corresponding to adjacent frequency bins of each CIF, and can be controlled by appropriately selecting the hyper-parameter $\rho$. Accordingly, considering $(\mathcal{D}, \mathbf{V} )$ to be the set of complete data, the joint distribution of the MAP estimation problem is formulated as,
\begin{align}
\nonumber \log &f(\mathcal{D}, \mathbf{V}, \bm{\theta}) =  \log \, f_{\mathbf{V}| \bm{\theta}}(\mathbf{V}| \bm{\theta}) + \log \, f(\bm{\theta}) + C_2\\
\nonumber & \resizebox{0.96\columnwidth}{!}{$\displaystyle = - \frac{1}{2} \sum_{m = 1}^{M} \Big \{ \log |\mathbf{Q}_m| +  (\mathbf{w}_{m} \!-\! \mathbf{\Phi} \mathbf{w}_{m-1})^\top \mathbf{Q}_m^{-1} (\mathbf{w}_{m} \!-\!  \mathbf{\Phi} \mathbf{w}_{m-1}) \Big \}$} \\
\nonumber &  \resizebox{0.94\columnwidth}{!}{$\displaystyle - \rho \sum_{m = 1}^{M} \sum_{j = 1}^{2J} \sum_{n = 1}^{ N - 2} \Big ( \log (Q_{m, J(2n-1)+j}) - \log (Q_{m, J(2n+1)+j}) \Big )^2 + C_3$},
\end{align}
where $C_2$ and $C_3$ represent terms that do not depend on $\bm{\theta}$. We next construct the EM algorithm for solving the MAP estimation problem:
\medskip

\subsubsection{E Step}

Suppose that the current estimate of $\bm{\theta}$ at the $r^{\sf th}$ iteration is given by $\widehat{\bm{\theta}}^{(r)}$.  Then, the Q-function 
\begin{equation}
\mathcal{Q} ^{(r)} := \mathbb{E}[\log \,f(\mathcal{D}, \mathbf{V}, \bm{\theta}) | \mathcal{D}, \widehat{\bm{\theta}}^{(r)}]
\end{equation}
can be evaluated if the conditional expectations 
\begin{align}
\nonumber & \mathbf{w}_{m|M} := \mathbb{E}[\mathbf{w}_{m}| \mathcal{D}, \widehat{\bm{\theta}}^{(r)}]\\
\nonumber & \mathbf{\Sigma}_{m|M} := \mathbb{E}[(\mathbf{w}_{m} \!-\! \mathbf{w}_{m|M})(\mathbf{w}_{m} \!-\! \mathbf{w}_{m|M})^{\top}| \mathcal{D}, \widehat{\bm{\theta}}^{(r)}]\\
\nonumber & \mathbf{\Sigma}_{m, m-1|M} := \mathbb{E}[(\mathbf{w}_{m} \!-\! \mathbf{w}_{m|M})(\mathbf{w}_{m-1} \!-\! \mathbf{w}_{m-1|M})^{\top}| \mathcal{D}, \widehat{\bm{\theta}}^{(r)}]
\end{align}
are known. To compute the conditional expectations $\mathbf{w}_{m|M}$, $\mathbf{\Sigma}_{m|M}$ and $\mathbf{\Sigma}_{m, m-1|M}$ we utilize the Fixed Interval Smoothing \cite{RAUCH_1965} and the Covariance Smoothing \cite{JONG} algorithms. However, considering that the forward model is not Gaussian in our formulation, we cannot directly use these algorithms to estimate $\mathbf{w}_{m|m}$ and $\mathbf{\Sigma}_{m|m}$. 

Hence, we employ an alternative method to estimate these conditional moments, utilizing the distribution $f({\{\mathbf{V}_s\}_{s=1}^m| \mathcal{D}_1^m, \hat{\bm{\theta}}^{(r)}})$, where $\mathcal{D}_1^m = \{ n_{k,j}^{(l)}\}_{k,j,l \, = 1}^{mW, J, L}$. From Bayes's theorem we see that  this is proportional to the product of the two distributions, $f(\mathcal{D}_1^m |\{\mathbf{V}_s\}_{s=1}^m, \hat{\bm{\theta}}^{(r)})$ and $f(\{\mathbf{V}_s\}_{s=1}^m| \hat{\bm{\theta}}^{(r)})$, which are Binomial and Gaussian distributed, respectively. Accordingly, we see that the distribution $\{\mathbf{V}_s\}_{s=1}^m|\mathcal{D}_1^m, \hat{\bm{\theta}}^{(r)}$ is unimodal, and hence we approximate it by a multivariate Gaussian, and derive the mean of the distribution, $\mathbf{w}_{m|m}^{(r)}$ by the mode of $\log \, f({\{\mathbf{V}_s\}_{s=1}^m|\mathcal{D}_1^m, \hat{\bm{\theta}}^{(r)}})$:

\vspace{-2mm}
\begin{align}
\nonumber \mathbf{w}_{m|m}^{(r)} & = \underset{\mathbf{w}_m}{\textrm{argmax}} \; \log \, f({\{\mathbf{V}_s\}_{s=1}^m|\mathcal{D}_1^m, \hat{\bm{\theta}}^{(r)}})  \\
\nonumber & = \underset{\mathbf{w}_m}{\textrm{argmax}} \bigg( \sum_{j = 1}^{J} \sum_{s = 1}^{m}  \sum_{w = 1}^{W}  L \,\Big\{\overline{n}_{(s-1)W+w,j} (\mathbf{A}_s \mathbf{V}_s)_{w,j} \\ 
\nonumber & - \log \, (1 + \exp(\mathbf{A}_s \mathbf{V}_s)_{w,j}))\big\} - \frac{1}{2} \sum_{s = 1}^{m}\, \Big\{ \log |\mathbf{Q}_s^{(r)}| \\ 
&  +  (\mathbf{w}_{s} \!-\!  \mathbf{\Phi} \mathbf{w}_{s-1})^\top (\mathbf{Q}_s^{(r)})^{-1} (\mathbf{w}_{s} \!-\!  \mathbf{\Phi} \mathbf{w}_{s-1}) \Big\} \bigg),
\label{eq14}
\end{align}
\vspace{-3mm}

\noindent and its covariance, $\mathbf{\Sigma}_{m|m}^{(r)}$ by the negative of the inverse Hessian of $\log  f({\{\mathbf{V}_s\}_{s=1}^m|\mathcal{D}_1^m, \hat{\bm{\theta}}^{(r)}})$. Observing that the objective function is a combination of convex functions and is differentiable, we perform the above optimization for $\mathbf{w}_{m|m}^{(r)}$ using the Newton-Raphson method. Further, we concurrently estimate $\mathbf{\Sigma}_{m|m}^{(r)}$ using the Hessian matrix of the objective.

\subsubsection{M Step}
Noticing that the Q-function $\mathcal{Q} ^{(r)}$ is separable in terms of $\mathbf{Q}_m$'s, we can update $\mathbf{Q}_m$ independently, for $1 \leq m \leq M$:
\begin{equation}
\begin{split}
& \mathbf{Q}_m^{(r+1)} = \underset{\mathbf{Q}_m}{\textrm{argmax}} \quad \mathcal{Q} ^{(r)}.
\end{split}
\label{eq15}
\end{equation}
The partial derivative of the log-likelihood function with respect to each diagonal element of $\mathbf{Q}_m$ takes the form,
\begin{equation}
\nonumber \resizebox{\columnwidth}{!}{$\displaystyle \frac{\partial \mathcal{Q} ^{(r)}}{\partial Q_{m,i}} = - \frac{1}{2} \left \{ \frac{1}{Q_{m,i}} \!-\! \frac{P_{m,i}}{Q_{m,i}^2} \right \} \!-\! \frac{2 \rho}{Q_{m,i}} \log \left ( \frac{Q_{m,i}^2}{Q_{m,i-2J} Q_{m,i+2J}} \right ),$}
\label{eq16}
\end{equation}
where $P_{m,i}$ is the $i^{\sf th}$ diagonal element of the matrix
\begin{align}
\nonumber \mathbf{P}_m :=& \, \, \mathbb{E}[ (\mathbf{w}_{m} -  \mathbf{\Phi} \mathbf{w}_{m-1}) (\mathbf{w}_{m} -  \mathbf{\Phi} \mathbf{w}_{m-1})^\top | \mathcal{D}, \widehat{\bm{\theta}}^{(r)}]\\
\nonumber =& \, \, \resizebox{0.885\columnwidth}{!}{$\displaystyle \mathbf{\Sigma}_{m|M} \!+\! \mathbf{w}_{m|M} \mathbf{w}_{m|M}^\top \!+\! \mathbf{\Phi}(\mathbf{\Sigma}_{m-1|M} \!+\!  \mathbf{w}_{m-1|M}  \mathbf{w}_{m-1|M}^\top)\mathbf{\Phi}^\top$}\\
\nonumber &- (\mathbf{\Sigma}_{m,m-1|M}  + \mathbf{w}_{m|M}  \mathbf{w}_{m-1|M}^\top)\mathbf{\Phi}^\top\\
& -\mathbf{\Phi} (\mathbf{\Sigma}^\top_{m,m-1|M}  + \mathbf{w}_{m-1|M}\mathbf{w}_{m|M}^\top ).
\label{eq:P_m}
\end{align}
Accordingly, we employ the multivariate Newton-Raphson algorithm to perform this maximization and derive the updates for $\mathbf{Q}_m$, $1 \leq m \leq M$. Following convergence, we use the final estimates of  $\mathbf{w}_{m|M}$ and $\mathbf{\Sigma}_{m|M}$ derived through the above EM procedure to compute 
\begin{equation}
\mathbf{R}_m := \mathbb{E}[\mathbf{w}_{m}\mathbf{w}_{m}^{\top} | \mathcal{D}, \bm{\theta}] = \mathbf{\Sigma}_{m|M} + \mathbf{w}_{m|M} \mathbf{w}_{m|M}^{\top},
\end{equation}
for $1 \leq m \leq M$. Then, for $1 \leq n \leq N-1$, the ESD matrix according to Eq. (\ref{eq9}) is estimated as
\begin{equation}
\begin{split}
\widehat{\bm{\Psi}}_m(\omega_n) = & \frac{\pi}{N} \Big \{ \left({\mathbf{R}_m^n}_{(1:J,1:J)} + {\mathbf{R}_m^n}_{(J+1:2J,J+1:2J)} \right)\\ 
& + i\left({\mathbf{R}_m^n}_{(J+1:2J,1:J)} - {\mathbf{R}_m^n}_{(1:J,J+1:2J)}\right) \Big\},
\end{split}
\label{eq17}
\end{equation}
where $\mathbf{R}_m^n:= \mathbb{E}[\mathbf{w}_{m,n} \mathbf{w}_{m,n}^{\top}| \mathcal{D}, \bm{\theta}]$ is a submatrix of $\mathbf{R}_m$ defined as:
\begin{align}
\nonumber \mathbf{R}_m^n := ({\mathbf{R}_m})_{(J(2n-1)+1:J(2n+1),J(2n-1)+1:J(2n+1))}.
\end{align}

\setlength{\textfloatsep}{10pt}
{ \renewcommand\baselinestretch{0.8}\selectfont
	\begin{algorithm} [t!]
		\textbf{Inputs:} Ensemble averages of the spiking observations $\{ \overline{n}_{k,j} \}_{k,j = 1}^{K, J}$, hyper-parameters $\rho$ and $\alpha$, maximum number of EM iterations $R_{\max}$.   \\
		\textbf{Outputs:} Estimates of the ESD matrices $\widehat{\bm{\Psi}}_m(\omega_n) $ for $ 1 \leq m \leq M$, $1 \leq j \leq N -1 $\\
		\textbf{Initialization:} Initial choice of $\mathbf{Q}_m^{(0)}$, $\mathbf{w}_{0|0} = \bm{0}$, $\mathbf{\Sigma}_{0|0} = \bm{0}$, $r = 1$.\\
		\vspace{-3mm}
		\begin{algorithmic}[1]
			
			\FOR{$r \leq R_{\max}$}
			
			\STATE Forward filter, for $ m = 1, 2, \dots, M$ \\
			  \quad \quad  $\mathbf{w}_{m|m-1} =  \mathbf{\Phi} \mathbf{w}_{m-1|m-1}$ \\
			  \quad \quad $\mathbf{\Sigma}_{m|m-1} =  \mathbf{\Phi} \mathbf{\Sigma}_{m-1|m-1} \mathbf{\Phi}^\top$ + $\mathbf{Q}_m^{(r)}$ \\
			  \quad \quad Compute $\mathbf{w}_{m|m}$ and $\mathbf{\Sigma}_{m|m}$ using the Newton's method as described in Eq. (\ref{eq14}).\\ 
			 \STATE Backward smoother, for $m = M - 1, M - 2, \dots, 1$ \\
			 \quad \quad $\mathbf{B}_{m} = \mathbf{\Sigma}_{m|m} \mathbf{\Phi}^\top \mathbf{\Sigma}_{m+1|m}^{-1}$\\
			 \quad \quad $\mathbf{w}_{m|M} = \mathbf{w}_{m|m} + \mathbf{B}_{m} (\mathbf{w}_{m+1|M} -  \mathbf{w}_{m+1|m})$ \\
			 \quad \quad $\mathbf{\Sigma}_{m|M} = \mathbf{\Sigma}_{m|m} + \mathbf{B}_{m}(\mathbf{\Sigma}_{m+1|M} - \mathbf{\Sigma}_{m+1|m})\mathbf{B}_{m}^\top$ \\
			 \STATE Covariance smoother, for $m = M-1, M-2, \dots, 1$\\
			 \quad \quad $\mathbf{\Sigma}_{m, m-1|M} = \mathbf{\Sigma}_{m|M}^\top \mathbf{B}_{m-1}^\top$ \\  
			 \STATE Update the $\mathbf{Q}_m$'s independently, for $ m = 1, 2, \dots, M$ using the multivariate Newton-Raphson method to solve \\
			 \quad \quad  $\mathbf{Q}_m^{(r+1)} = \underset{\mathbf{Q}_m}{\textrm{argmax}} \quad \mathcal{Q} ^{(r)}$.
			\STATE Set $r \leftarrow r + 1$ 
			\ENDFOR
			\FOR {$ 1 \leq m \leq M$}
			\STATE $\mathbf{R}_m = \mathbf{\Sigma}_{m|M} + \mathbf{w}_{m|M} \mathbf{w}_{m|M}^\top$ \\
			\FOR{ $1 \leq n \leq N-1 $}
			\STATE $\mathbf{R}_m^n = (\mathbf{R}_m)_{(J(2n-1)+1:J(2n+1),J(2n-1)+1:J(2n+1))}$. \\
			\STATE $\widehat{\bm{\Psi}}_m(\omega_n) = \frac{\pi}{N}\left\{ \mathbf{R}^n_{m{(1:J,1:J)}} + \mathbf{R}^n_{m{(J+1:2J,J+1:2J)}} \right.$\\
			$\quad \quad \quad \quad \left.+ \;i \left (\mathbf{R}^n_{m{(J+1:2J,1:J)}} - \mathbf{R}^n_{m(1:J,J+1:2J)} \right) \right\}$.
			\ENDFOR
			\ENDFOR

			\STATE Return $\widehat{\bm{\Psi}}_m(\omega_n)$ for $1 \leq m \leq M$, $1 \leq n \leq N-1$
			\caption{Estimation of the Evolutionary Spectral Density Matrix via the EM Algorithm}
			\label{Alg:Algorithm1}
		\end{algorithmic}
	\end{algorithm}}

An implementation of this estimation procedure is outlined in Algorithm \ref{Alg:Algorithm1}. As mentioned in Section \ref{proposed_estimator}, the same EM procedure can be carried out for $\{\overline{n}_{k,j}^{(p)}\}$, for $p=1,2,\cdots,P$ in order to estimate the multivariate eigen-spectra. Finally, the multitaper spectral density matrix is formed by averaging the eigen-spectral estimates as outlined in Algorithm \ref{Alg:Algorithm2}. We refer to our proposed algorithm as the Point Process Multitaper Evolutionary Spectral Density (PPMT-ESD) estimator.

\setlength{\textfloatsep}{4pt}
{ \renewcommand\baselinestretch{0.8}\selectfont        
	\begin{algorithm} [t!]
		\textbf{Inputs:} Collection of ensemble averages of the observations $\{ \overline{n}_{k,j} \}_{k,j = 1}^{K, J}$, the set of $P$ dpss tapers of length $W$ $ \{ \nu_w^{(p)} \}_{w,p = 1}^{W, P}$ \\
		\textbf{Outputs:} The multitaper estimates of the ESD matrices $\widehat{\bm{\Psi}}^{\sf mt}_m(\omega_n) $ for $ 1 \leq m \leq M$, $1 \leq n \leq N -1 $\\
		\vspace{-3mm}
		\begin{algorithmic}[1]
			
			\FOR{$ p = 1, 2, \cdots, P$}
			
			\FOR{$1 \leq w \leq W, \, \, 1 \leq m \leq M, \, \, 1 \leq j \leq J$}
			\STATE $ k = ((m-1)W + w)$ \\ 
			\IF{$\overline{n}_{k,j} \neq 0$ and $\overline{n}_{k,j} \neq 1$}
			\STATE $(\overline{n}_{k,j})^{(p)} \quad = \quad \operatorname{logistic} \, (\,\operatorname{logit}(\overline{n}_{k,j}) \, \nu_w^{(p)}\,)$
			\ELSE
			\STATE $(\overline{n}_{k,j})^{(p)} \quad = \quad \overline{n}_{k,j} $
			\ENDIF
			\ENDFOR
			\STATE Compute the $p^{\sf th}$ tapered spectral density matrix estimate, $\widehat{\bm{\Psi}}^{(p)}_m(\omega_n) $ for $ 1 \leq m \leq M$, $1 \leq n \leq N -1 $, using Algorithm \ref{Alg:Algorithm1}, with $\{ \overline{n}_{k,j}^{(p)} \}_{k,j = 1}^{K, J}$ as the input collection of ensemble averages.
			\ENDFOR
			
			\FOR{ $1 \leq m \leq M, \, \, 1 \leq n \leq N-1$}
			\STATE $\widehat{\bm{\Psi}}^{\sf mt}_m(\omega_n) = \frac{1}{P}\sum_{p = 1}^{P} \, \widehat{\bm{\Psi}}^{(p)}_m(\omega_n)$ 
			\ENDFOR

			\STATE return $\widehat{\bm{\Psi}}^{\sf mt}_m(\omega_n)$ for $1 \leq m \leq M$, $1 \leq n \leq N-1$
			\caption{Estimation of the Multitaper Evolutionary Spectral Density Matrix}
			\label{Alg:Algorithm2}
		\end{algorithmic}
	\end{algorithm}}
	
\vspace{-2mm}
\section{Theoretical Analysis}\label{sec:analysis}
\vspace{-2mm}

In this section we derive bounds on the bias and variance of the proposed PPMT-ESD estimator. We first briefly review the corresponding bounds for the direct multitaper estimator (Eq. (\ref{eq18})) of a time-series. As proven in \cite{Rosenblatt}, the bias and variance of the multitaper estimate of a stationary process $x_1, x_2, \cdots, x_{K}$ with a uniformly continuous PSD $S(\omega)$ are bounded as follows:  
\begin{equation}
\begin{split}
|\operatorname{bias}(S^{\sf mt}(\omega))| \, 
&\leq \, (\underset{\omega}{\text{sup}} \, S(\omega)) \left\{ 1 - \frac{1}{P} \sum_{p = 1}^{P}c_p \right\} + o(1), \\
 \operatorname{Var}(S^{\sf mt}(\omega)) 
& = \{ 1 + \beta (\omega)\} \frac{1}{P^2} \sum_{p = 1}^{P}c_p^2 \, S(\omega)^2 \\
& + \mathcal{O}\left( \frac{1}{P} \sum_{p = 1}^{P} (1 - c_p)  \right) + o(1),
\end{split}
\label{eq45}
\end{equation}
as $K \rightarrow \infty$. Here $c_p$ is the eigenvalue corresponding to the taper $ \nu^{(p)}$ and $\beta (\omega) = 0$ if $\frac{\omega}{2\pi} \neq 0, 1/2 \, \operatorname{mod} 1$ and is equal to $1$ otherwise. It is evident that the multitaper estimator $\widehat{S}^{\sf mt}(\omega)$ is asymptotically unbiased.

We state our main theorem for a univariate second-order stationary process $x_1,x_2,\cdots,x_K$, corresponding to the special case of $J=1$, for the clarity of exposition. We later on provide extensions to the multivariate and quasi-stationary cases. In order to apply the same treatment to the case of multitaper estimate of the evolutionary spectra from spiking observations, we need to make two extra technical assumptions. 

\emph{Assumption (1).} From Eq. (\ref{eq54}), the data likelihood for the univariate case can be expressed as 
\begin{equation}
\nonumber f(\mathcal{D}) := \int \resizebox{0.8\columnwidth}{!}{$\displaystyle \exp\left(\sum_{k = 1}^{K} L \Big\{\overline{n}_{k} x_{k}
- \log \, (1 + \exp(x_{k})\Big\}\right) \prod_{k=1}^{K} dx_{k}$},
\end{equation}
where $\overline{n}_k := \frac{1}{L} \sum_{l=1}^L n_k^{(l)}$. Given the nonlinear functional form of the integrand, we consider the saddle point approximation \cite{Goutis_1999} of the integral, and take $\operatorname{logit}(\overline{n}_{k})$ as an estimator of $x_{k}$. Under this approximation, the multitaper spectral estimate is given by
\vspace{-2mm}
\begin{equation}\label{eq23}
\widehat{S}^{\sf mt}(\omega) =\frac{1}{P} \sum_{p = 1}^{P} \, |\widehat{y}^{(p)}(\omega)|^2, \quad \textrm{where,}
\end{equation}
\begin{equation} 
\widehat{y}^{(p)}(\omega) := \sum_{k = 1}^{K} \nu_k^{(p)} \, \operatorname{logit}(\overline{n}_{k}) \, e^{-i \omega k}.
\end{equation}
\vspace{-2mm}

\emph{Assumption (2).} We assume that $|x_k| \; \le B$ for all $k=1,2,\cdots,K$, for some fixed upper bound $B$. In defining the bias and variance, we condition the expectations on the event $A := \{\overline{n}_{k} \; | \; \overline{n}_{k} \neq 0, \, \overline{n}_{k} \neq 1, \, 1 \leq k \leq K \}$, which is highly probable due to the absolute boundedness of $x_{k}$, for large $L$ (See Appendix \ref{Appendix_A} for details). We denote the conditional bias and variance by $\operatorname{bias}_A(\cdot)$ and $\operatorname{Var}_A (\cdot)$, respectively. Note that for the multivariate case, we naturally extend this assumption to $|x_{k,j}| \le B$ for all $k,j$, and define the set $A$ as $A := \{\overline{n}_{k,j} \; | \; \overline{n}_{k,j} \neq 0, \, \overline{n}_{k,j} \neq 1, \, 1 \leq k \leq K, 1 \leq j \leq J \}$.

\begin{thm}[Univariate Case]\label{thm1}
Under the Assumptions (1) and (2) and for sufficiently large $L$, the conditional bias and variance of $\widehat{S}^{\sf mt}(\omega)$ in Eq. (\ref{eq23}) is bounded with respect to those of the direct multitaper estimate $S^{\sf mt}(\omega)$ given in Eq. (\ref{eq45}) as:
\begin{equation}
\begin{split}
|\operatorname{bias}_A(\widehat{S}^{\sf mt}(\omega))| \leq   g_1 K \,\frac{\log L}{\sqrt{L}} \,  + |\operatorname{bias}(S^{\sf mt}(\omega))|,
\end{split}
\label{eq24}
\end{equation}
\begin{equation}
\begin{split}
\operatorname{Var}_A (\widehat{S}^{\sf mt}(\omega)) \leq \, \Bigg \{  \, g_2 K \, \frac{\log L}{\sqrt{L}} \, + \sqrt{ \operatorname{Var}(S^{\sf mt}(\omega))} \Bigg\}^2
\end{split}
\label{eq25}
\end{equation}
where $g_1$ and $g_2$ are bounded constants depending on $B$, $K$ and $L$.
\end{thm}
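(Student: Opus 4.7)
The natural approach is to view $\widehat{S}^{\sf mt}(\omega)$ as a perturbation of the ideal direct multitaper $S^{\sf mt}(\omega)$ that would be computed from the latent $x_k$, and to quantify the perturbation by concentration of $\overline{n}_k$ around $\operatorname{logistic}(x_k)$. On the event $A$, set $\epsilon_k := \operatorname{logit}(\overline{n}_k) - x_k$ and write
\begin{equation*}
\widehat{y}^{(p)}(\omega) = y^{(p)}(\omega) + \delta^{(p)}(\omega), \quad \delta^{(p)}(\omega) := \sum_{k=1}^{K} \nu_k^{(p)}\,\epsilon_k\, e^{-i\omega k},
\end{equation*}
so that $|\widehat{y}^{(p)}|^2 - |y^{(p)}|^2 = 2\operatorname{Re}(y^{(p)*}\delta^{(p)}) + |\delta^{(p)}|^2$. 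The $K\log L/\sqrt{L}$ rate in the additive correction will emerge by transferring Bernoulli concentration, through $\operatorname{logit}$ and then through the taper sum, into a bound on $\delta^{(p)}$.

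\textbf{Moment control for $\epsilon_k$ and $\delta^{(p)}$.} Under Assumption~(2), $\operatorname{logistic}(x_k)\in[\operatorname{logistic}(-B),\operatorname{logistic}(B)]$, so $\operatorname{logit}'(\operatorname{logistic}(x_k))=1/(\operatorname{logistic}(x_k)(1-\operatorname{logistic}(x_k)))$ is bounded by a constant $C_B$. Hoeffding's inequality gives $\Pr[|\overline{n}_k-\operatorname{logistic}(x_k)|>t]\le 2e^{-2Lt^2}$; choosing $t\asymp\sqrt{\log L/L}$ and Taylor-expanding $\operatorname{logit}$ at $\operatorname{logistic}(x_k)$ yields $|\epsilon_k|\le C'_B\sqrt{\log L/L}$ on a probability-$(1-O(L^{-2}))$ event. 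On its complement, still inside $A$ so that $\overline{n}_k\in[1/L,1-1/L]$, the crude bound $|\epsilon_k|\le\log L+B$ combined with the exponentially small probability contributes negligibly. Thus $\mathbb{E}_A|\epsilon_k|^q\le C''_{B,q}(\log L/\sqrt{L})^q$ for $q=1,2$. The dpss normalization $\sum_k|\nu_k^{(p)}|^2=1$ implies $|\nu_k^{(p)}|\le 1$, so the triangle inequality gives $\mathbb{E}_A|\delta^{(p)}(\omega)|^q\le C'''_{B,q}\,K^q(\log L/\sqrt{L})^q$ uniformly in $p,\omega$.

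\textbf{Assembling bias and variance.} For the bias,
\begin{equation*}
|\operatorname{bias}_A(\widehat{S}^{\sf mt})| \le \mathbb{E}_A\bigl|\widehat{S}^{\sf mt}-S^{\sf mt}\bigr| + \bigl|\mathbb{E}_A[S^{\sf mt}]-S(\omega)\bigr|.
\end{equation*}
The second summand equals $|\operatorname{bias}(S^{\sf mt})|$ up to an $O(\Pr(A^c))$ correction controlled by $\Pr(A^c)\le 2Ke^{-c_B L}$. The first summand, via the eigen-spectral expansion and Cauchy--Schwarz, is bounded by
\begin{equation*}
\tfrac{2}{P}\sum_p\sqrt{\mathbb{E}_A|y^{(p)}|^2}\,\sqrt{\mathbb{E}_A|\delta^{(p)}|^2} + \tfrac{1}{P}\sum_p\mathbb{E}_A|\delta^{(p)}|^2,
\end{equation*}
which is $O(K\log L/\sqrt{L})$ since $\mathbb{E}_A|y^{(p)}|^2$ is bounded in terms of the PSD, and absorbs into $g_1 K\log L/\sqrt{L}$. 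For the variance, Minkowski's inequality in $L^2(A)$ gives $\sqrt{\operatorname{Var}_A(\widehat{S}^{\sf mt})}\le\sqrt{\operatorname{Var}_A(S^{\sf mt})}+\|\widehat{S}^{\sf mt}-S^{\sf mt}\|_{L^2(A)}$; the first term converges to $\sqrt{\operatorname{Var}(S^{\sf mt})}$ up to $O(\Pr(A^c))$, and the second is bounded by the same moment estimates (applied to the fourth moment of $\delta^{(p)}$), which produces Eq.~(\ref{eq25}) after squaring.

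\textbf{Main obstacle.} The most delicate step is controlling the cross term $\operatorname{Re}(y^{(p)*}\delta^{(p)})$: $y^{(p)}$ and $\delta^{(p)}$ are not independent, since both depend on $\{x_k\}$, with only the Bernoulli sampling noise entering $\delta^{(p)}$ additionally. This dependence precludes CLT-style cancellations across $k$ that would have yielded a $\sqrt{K}$ factor, forcing the coarser Cauchy--Schwarz bound responsible for the linear $K$ in the stated rate. A secondary subtlety is that the direct-multitaper expectations of Eq.~(\ref{eq45}) are taken over $\{x_k\}$ alone, while those in the theorem are additionally conditioned on $A$; a clean reduction between the two relies on the exponential decay $\Pr(A^c)\le 2Ke^{-c_B L}$, which is in turn why Assumption~(2), keeping $\operatorname{logistic}(x_k)$ bounded away from $\{0,1\}$, is indispensable.
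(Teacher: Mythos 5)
Your proposal is correct and shares the paper's outer skeleton: decompose $\widehat{S}^{\sf mt}-S^{\sf mt}$, bound the bias by the triangle inequality and the variance by a Minkowski/Cauchy--Schwarz argument (the paper's Eqs.~(\ref{eq26})--(\ref{eq27}) are exactly your two ``assembling'' displays), and reduce everything to moments of $\varepsilon_k=\operatorname{logit}(\overline{n}_k)-x_k$. Where you genuinely diverge is the key technical lemma controlling $\varepsilon_k$. The paper confronts the blow-up of $\operatorname{logit}'$ near $\{0,1\}$ with a deterministic, global bound (its Lemma~\ref{lemma1}): using convexity/concavity of $\operatorname{logit}$ on either side of $1/2$, it majorizes $|\varepsilon_k|$ by a piecewise-linear envelope $g(x_k,L)\,|\overline{n}_k-\lambda_k|$ valid on all of $A$, with $g(x_k,L)\lesssim(1+e^B)\log L$ because $\overline{n}_k\in[1/L,1-1/L]$ on $A$; it then only needs the binomial standard deviation $\mathbb{E}|\overline{n}_k-\lambda_k|\le 1/(2\sqrt{L})$ to get the $\log L/\sqrt{L}$ rate, and this yields fully explicit constants $g_1,g_2$. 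You instead use Hoeffding concentration plus a local Lipschitz/Taylor bound on a probability-$1-O(L^{-c})$ event, with the crude $|\varepsilon_k|\le\log L+B$ bound on the rare remainder inside $A$. Both routes deliver the stated rate (yours even gives $\sqrt{\log L}/\sqrt{L}$ on the good event, though the rare-event correction and the statement only require $\log L/\sqrt{L}$); the paper's envelope avoids the event-splitting bookkeeping and makes the constants traceable, while yours is more modular and standard. Two of your side remarks improve on the paper: you explicitly reconcile $\mathbb{E}_A[S^{\sf mt}]$ with the unconditional $\mathbb{E}[S^{\sf mt}]$ via $\Pr(A^c)$ (the paper silently conflates them), and your ``main obstacle'' observation correctly identifies why no CLT cancellation is available and the excess terms must scale linearly in $K$. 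One minor looseness: bounding $|\nu_k^{(p)}|\le 1$ gives $\mathbb{E}_A|\delta^{(p)}|^2\lesssim K^2(\log L/\sqrt{L})^2$, an extra factor of $K$ over the paper's $\sum_k(\nu_k^{(p)})^2=1$ accounting; this is harmless here only because $g_1,g_2$ are permitted to depend on $K$ and $L$, but using $\sum_k|\nu_k^{(p)}|\le\sqrt{K}$ recovers the paper's tighter constants.
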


\begin{proof}[Proof of Theorem \ref{thm1}]
The proof of Theorem \ref{thm1} is given in Appendix \ref{Appendix_A}.
\end{proof}

\noindent{\bf Remark.} In words, Theorem \ref{thm1} states that the cost of the indirect access to the process $\{x_k\}_{k=1}^K$ through spiking observations with $L$ trials appears as excess terms in both the bias and variance, which would go to zero as $\frac{L}{K^2 \log^2 L} \rightarrow \infty$. Hence, for large enough number of realizations $L$, one can expect a performance close to the direct multitaper estimate of $\{x_k\}_{k=1}^K$. The result of Theorem \ref{thm1} can be extended to the case of unconditional expectations, if the inverse mapping $\max(B, \, \min(-B, \, \operatorname{logit} (\overline{n}_{k}) ))$ is adopted instead of $ \operatorname{logit} (\overline{n}_{k}) $. Then, the bounds follow directly even without conditioning on $A$. While Assumption (2) on the boundedness of the time-series is natural in practical scenarios, as long as $B \leq \epsilon \log L$, for some $\epsilon < 1/2$, the excess bias and variance terms will be bounded by $\mathcal{O}(\log^2 L / L ^{1/2 - \epsilon})$, which will converge to zero asymptotically. Thus, the bias and variance of the estimator $\widehat{S}^{\sf mt}(\omega)$ will converge to those of $S^{\sf mt}(\omega)$ even under the milder condition, $|x_k|  \;  \leq \,  \epsilon \log L$ for $ 1 \leq k \leq K$, and some $\epsilon < 1/2$. 

The following corollary extends Theorem \ref{thm1} to the multivariate case:

\begin{corollary}[Stationary Multivariate Case]
\label{corollary1}
Consider a second order jointly stationary J-variate process $\{ \mathbf{x}_k\}_{k=1}^K$, where, $\mathbf{x}_k = [x_{k,1}, x_{k,2}, \cdots, x_{k,J}]^{\top}$. Suppose that the observations are binary spiking data,  $\{ n_{k,j}^{(l)} \}_{k,j,l = 1}^{K, J, L}$ with $ n_{k,j}^{(l)} \sim \operatorname{Bernoulli}({\lambda}_{k,j})$ and $\overline{n}_{k,j} := \frac{1}{L} \sum_{l = 1}^{L} n_{k,j}^{(l)}$, where ${\lambda}_{k,j} = \operatorname{logistic} \, (x_{k,j})$. Then, under Assumptions (1) and (2), the bias and variance of the multitaper cross-spectral estimate between the $r^{\sf th}$ and $t^{\sf th}$ processes, given by 

\vspace{-2mm}
\begin{equation}
\begin{split}
\nonumber \widehat{S}^{\sf mt}_{r,t}(\omega) \quad 
& = \quad \frac{1}{P} \sum_{p = 1}^{P} \, \widehat{y}_{r}^{(p)}(\omega)( \widehat{y}_{t}^{(p)}(\omega))^*, \quad \text{with} \\
\widehat{y}_{\ell}^{(p)}(\omega) \quad  
& = \quad  \sum_{k = 1}^{K} \nu_k^{(p)} \, \operatorname{logit} (\overline{n}_{k,\ell})\, e^{-i \omega k},
\end{split}
\label{eq46}
\end{equation}
\vspace{-2mm}

are bounded as follows:
\begin{equation}
\begin{split}
\nonumber |\operatorname{bias}_A(\widehat{S}^{\sf mt}_{r,t}(\omega))| \quad \leq \quad g'_1 K\frac{\log L}{\sqrt{L}}  + |\operatorname{bias}(S^{\sf mt}_{r,t}(\omega))|,
\end{split}
\label{eq47}
\end{equation}
\begin{equation}
\begin{split}
\nonumber \operatorname{Var}_A (\widehat{S}^{\sf mt}_{r,t}(\omega)) \quad \leq \quad \Bigg \{ g'_2 K \, \frac{\log L}{\sqrt{L}} +   \sqrt{ \operatorname{Var}(S^{\sf mt}_{r,t}(\omega))} \Bigg\}^2,
\end{split}
\label{eq48}
\end{equation}
where $g'_1$ and $g'_2$ are bounded constants and depend only on $B$, $K$, and $L$.
\end{corollary}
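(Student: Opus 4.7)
My plan is to reduce the multivariate cross-spectral bound to the univariate bounds of Theorem \ref{thm1} by writing each tapered estimator as a sum of an ``oracle'' part and a small perturbation. Concretely, for each process index $\ell \in \{r,t\}$ I will introduce
\[
\delta_\ell^{(p)}(\omega) \; := \; \widehat{y}_\ell^{(p)}(\omega) - y_\ell^{(p)}(\omega) \; = \; \sum_{k=1}^K \nu_k^{(p)} \bigl( \operatorname{logit}(\overline{n}_{k,\ell}) - x_{k,\ell} \bigr) e^{-i\omega k},
\]
so that $\widehat{y}_\ell^{(p)}(\omega) = y_\ell^{(p)}(\omega) + \delta_\ell^{(p)}(\omega)$. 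The univariate proof of Theorem \ref{thm1} already provides, under Assumptions (1) and (2), a conditional high-probability bound of order $K \log L / \sqrt{L}$ on $|\delta_\ell^{(p)}(\omega)|$ (and on its conditional moments), obtained by Taylor-expanding $\operatorname{logit}$ about $\operatorname{logistic}(x_{k,\ell})$ and applying Hoeffding's inequality to $\overline{n}_{k,\ell}$; I will simply invoke this bound componentwise for $\ell = r$ and $\ell = t$, noting that the event $A$ is the joint event across $k$ and $j$, so the bound holds uniformly.

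For the bias term, I expand
\[
\widehat{y}_r^{(p)} (\widehat{y}_t^{(p)})^* \; = \; y_r^{(p)} (y_t^{(p)})^* + y_r^{(p)} (\delta_t^{(p)})^* + \delta_r^{(p)} (y_t^{(p)})^* + \delta_r^{(p)} (\delta_t^{(p)})^*,
\]
average over $p$, take conditional expectation, and subtract the true cross-spectrum $S_{r,t}(\omega)$. The first term yields $\operatorname{bias}(S^{\sf mt}_{r,t}(\omega))$. For each of the three remaining cross terms I will apply the Cauchy--Schwarz inequality in the form $|\mathbb{E}_A[UV^*]| \le (\mathbb{E}_A|U|^2)^{1/2} (\mathbb{E}_A|V|^2)^{1/2}$, bound $\mathbb{E}_A|y_\ell^{(p)}|^2$ by a constant depending on $B$ and the dpss normalization, and bound $\mathbb{E}_A|\delta_\ell^{(p)}|^2$ by $O((K \log L / \sqrt{L})^2)$ using the same derivative/concentration argument as in Theorem \ref{thm1}. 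This produces the additive $g_1' K \log L / \sqrt{L}$ term.

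For the variance, I will exploit the fact that the standard deviation is subadditive under sums: writing $\widehat{S}^{\sf mt}_{r,t} = S^{\sf mt}_{r,t} + E$, where $E$ collects the three cross terms above averaged over $p$, Minkowski's inequality in $L^2$ gives
\[
\sqrt{\operatorname{Var}_A(\widehat{S}^{\sf mt}_{r,t}(\omega))} \; \le \; \sqrt{\operatorname{Var}(S^{\sf mt}_{r,t}(\omega))} + \sqrt{\operatorname{Var}_A(E)},
\]
and then $\operatorname{Var}_A(E) \le \mathbb{E}_A|E|^2$, which I will bound by another application of Cauchy--Schwarz and the second-moment bounds on $\delta_\ell^{(p)}$, yielding the $g_2' K \log L / \sqrt{L}$ term. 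Squaring both sides delivers the stated variance bound.

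The main obstacle will be the careful bookkeeping of the complex conjugation and the cross terms that mix $y^{(p)}$ with $\delta^{(p)}$, since unlike the univariate modulus-squared case these cannot be handled by a single Taylor expansion of $|\cdot|^2$; each of the four products must be controlled separately, and for the variance one must be attentive that the perturbation $\delta_\ell^{(p)}$ is not independent of $y_\ell^{(p)}$ nor of $\overline{n}_{k,\ell'}$ for $\ell' \ne \ell$, so the Cauchy--Schwarz step (rather than any independence argument) is what keeps the bound from blowing up. The constants $g_1'$ and $g_2'$ inherit their dependence on $B$, $K$, and $L$ from the univariate constants of Theorem \ref{thm1} together with uniform dpss-taper bounds $\sum_k (\nu_k^{(p)})^2 = 1$, so no new structural estimates are required beyond those already established.
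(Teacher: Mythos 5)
Your proposal is correct and follows essentially the same route as the paper, whose own proof of Corollary \ref{corollary1} is simply a statement that the argument of Theorem \ref{thm1} extends naturally to the multivariate case together with the resulting constants $g_1'$ and $g_2'$; your decomposition $\widehat{y}_\ell^{(p)} = y_\ell^{(p)} + \delta_\ell^{(p)}$, the four-term expansion of $\widehat{y}_r^{(p)}(\widehat{y}_t^{(p)})^*$, and the Cauchy--Schwarz/Minkowski handling of the cross terms are exactly the ``natural extension'' the paper has in mind (compare Eqs.~(\ref{eq26})--(\ref{eq27}) and (\ref{eq43})--(\ref{eq44})). One small correction to your parenthetical description of the imported bound: the paper does \emph{not} obtain the per-sample estimate on $|\operatorname{logit}(\overline{n}_{k,\ell}) - x_{k,\ell}|$ by Taylor expansion and Hoeffding --- it explicitly notes that a Taylor expansion of $\operatorname{logit}$ is unusable near the endpoints and instead proves the piecewise convexity/concavity bound of Lemma \ref{lemma1}, combined with the binomial variance and Jensen's inequality; since you invoke the bound as a black box this does not affect the validity of your argument, but the stated mechanism would not survive if one actually had to carry it out.
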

\begin{proof}
The proof is given in Appendix \ref{Appendix_B}.
\end{proof}

It is not difficult to verify that $\operatorname{bias}(S^{\sf mt}_{r,t}(\omega))$ and $\operatorname{Var}(S^{\sf mt}_{r,t}(\omega))$ can be upper bounded in a similar fashion to Eq. (\ref{eq45}), with the true cross-spectra $S_{r,t}(\omega)$ replacing $S(\omega)$. Before extending the result of Corollary \ref{corollary1} to the quasi-stationary case, we need an additional assumption:

\emph{Assumption (3).} Given that Corollary \ref{corollary1} holds for large $L$, in this regime we relax the prior distribution on $\mathbf{Q}_m$ to be flat, i.e., $f(\mathbf{Q}_m) \propto 1$. Recall that the rationale for using a prior on $\mathbf{Q}_m$ in Section \ref{sec:map} was to reduce the variance of the estimates in the low spiking regime, i.e., small $L$.

Finally, combining Corollary \ref{corollary1} and the treatment of \cite{Das}, we have the following corollary on the bias and variance of the PPMT-ESD estimator:

\begin{corollary}[Quasi-stationary Multivariate Case]\label{corollary2}
Suppose that the J-variate process in Corollary \ref{corollary1} is quasi-stationary (jointly stationary within consecutive windows of length $W$). Let $\Psi_{m,r,t}(\omega_n)$ be the cross-spectra between the $r^{\sf th}$ and $t^{\sf th}$ processes over window $m$, $1 \leq m \leq K/W$, and $\widehat{\Psi}^{\sf mt}_{m,r,t}(\omega_n)$ be the corresponding multitaper estimate obtained from spiking observations. Then, under Assumptions (1)--(3), the bias and variance of the proposed PPMT-ESD estimator at window $m$ can be bounded as, 
\begin{equation}
\begin{split}
\nonumber & |\operatorname{bias}_A(\widehat{\Psi}^{\sf mt}_{m,r,t}(\omega_n))| \leq  \resizebox{0.6\columnwidth}{!}{$\displaystyle  g''_1(\omega_n) W \, \frac{\log L}{\sqrt{L}} + |\Psi_{m,r,t}(\omega_n)| |1 \!-\! \kappa_m(\omega_n)|$} \\
& + \kappa_m(\omega_n) \left \{\underset{\omega}{\emph{sup}} \left \{ |\Psi_{m,r,t}(\omega)| \right \} \left( 1 - \frac{1}{P} \sum_{p = 1}^{P} c_p \right)  + o(1) \right \},\\
\end{split}
\label{eq49}
\end{equation}
\begin{equation}
\begin{split}
\nonumber & \operatorname{Var}_A (\widehat{\Psi}^{\sf mt}_{m,r,t}(\omega_n)) \! \leq \! \resizebox{0.64\columnwidth}{!}{$\displaystyle \Bigg \{  g''_2(\omega_n) W \, \frac{\log L}{\sqrt{L}} \,  + \sqrt{\frac{2}{P}}\underset{\omega}{\emph{sup}} \left \{\kappa_m(\omega) |\Psi_{m,r,t}(\omega)| \right\}  \Bigg\}^2$},
\end{split}
\label{eq50}
\end{equation}
where $g''_1(\omega)$, $g''_2(\omega)$ are bounded functions of $B$, $L$, $W$ and $\kappa_m(\cdot)$ is a function of $\omega$, given in Appendix \ref{Appendix_B}.
\end{corollary}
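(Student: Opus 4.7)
The plan is to split the analysis into two layers stacked on top of Corollary~\ref{corollary1}: a spiking-observation layer, which is already controlled in the stationary multivariate case, and a state-space smoother layer from Section~\ref{sec:map}, which couples the estimates across the $M$ windows. Quasi-stationarity makes this separation natural: within window $m$ one compares the PPMT-ESD output against the hypothetical direct multitaper cross-spectral estimator $S^{\sf mt}_{m,r,t}(\omega_n)$ that one would form from $W$ direct samples of $\{x_{k,j}\}$, and then compares that estimator against the true within-window cross-spectrum $\Psi_{m,r,t}(\omega_n)$.

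First, I would apply Corollary~\ref{corollary1} inside each fixed window of length $W$, with $K$ replaced by $W$. This bounds the excess bias and excess standard deviation introduced by logit-inverting the tapered ensemble means $\{\overline{n}_{k,j}^{(p)}\}$, each by a term of order $W\log L/\sqrt{L}$. Assumption~(2) bounds $|x_{k,j}|$ uniformly in $(k,j)$, so the Corollary~\ref{corollary1} constants $g'_1,g'_2$ can be taken uniform in $m$ and absorbed into $g''_1(\omega_n)$ and $g''_2(\omega_n)$. Assumption~(3), the flat prior on $\mathbf{Q}_m$, is precisely what keeps this comparison clean in the large-$L$ regime: otherwise the Laplacian regularizer in the prior on $f(\mathbf{Q}_m)$ would introduce an extra $\rho$-dependent bias that would not match the Corollary~\ref{corollary1} template.

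Second, I would import the treatment of \cite{Das}, which analyzes the same state-space EM smoother in the regime where the per-window estimates are approximately Gaussian, as was justified in the E-step of Section~\ref{sec:map}. That analysis realizes the backward smoother as an approximately linear operator acting across windows whose effective per-frequency gain at $\omega_n$ is the scalar $\kappa_m(\omega_n)\in[0,1]$ defined in Appendix~\ref{Appendix_B}. The resulting bias splits into a shrinkage term $|\Psi_{m,r,t}(\omega_n)||1-\kappa_m(\omega_n)|$ and the standard multitaper leakage term $\sup_\omega|\Psi_{m,r,t}(\omega)|\bigl(1-\tfrac{1}{P}\sum_p c_p\bigr)+o(1)$ attenuated by $\kappa_m(\omega_n)$, as in Eq.~(\ref{eq45}); the variance reduces to $(2/P)\sup_\omega\{\kappa_m(\omega)|\Psi_{m,r,t}(\omega)|\}^2$, the factor of $2$ arising from recombining the real and imaginary cross-spectral components according to Eq.~(\ref{eq9}). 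The passage from scalar to multivariate is routine because $\mathbf{\Phi}=\alpha\mathbf{I}$ and each $\mathbf{Q}_m$ is diagonal, so the smoother decouples across the $J$-variate index.

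Finally, I would combine the two layers via the bias triangle inequality and the Minkowski inequality $\sqrt{\operatorname{Var}_A(\widehat A)}\le\sqrt{\operatorname{Var}_A(\widehat B)}+\sqrt{\mathbb{E}_A|\widehat A-\widehat B|^2}$, which gives exactly the stated squared-sum form of the variance bound. The main obstacle I anticipate is justifying the frequency-domain diagonalization that underlies $\kappa_m(\omega)$: although the tapered DFT acts along $k$ within a window of length $W$ and the smoother acts along the window index $m=1,\dots,M$, one must verify that these two operations approximately commute once the local log-likelihoods are replaced by Gaussian surrogates, so that a single per-frequency smoother gain emerges. A secondary technical point is carrying the uniform-in-$m$ boundedness from Assumption~(2) through the EM recursion so that the Corollary~\ref{corollary1} constants collapse into the bounded functions $g''_1(\omega_n)$ and $g''_2(\omega_n)$.
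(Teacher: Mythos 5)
Your proposal is correct and follows essentially the same route as the paper: the paper likewise grafts the logit-error bounds from Theorem \ref{thm1}/Corollary \ref{corollary1} onto the state-space smoother analysis of \cite{Das} (identifying the tapered $\operatorname{logit}(\overline{n}_{k,j})$ as the effective observations, taking $\mathbf{Q}_m=\mathbf{Q}$ and the smoother gain matrices diagonal in steady state), which resolves your anticipated diagonalization obstacle by assumption rather than by a commutation argument. The only cosmetic difference is that in the paper the cross-window smoother weights $\eta\,\eta\sum_{s,s'}\gamma^{|s-m|}\gamma^{|s'-m|}$ multiply the excess spiking terms inside $g''_1(\omega_n)$ and $g''_2(\omega_n)$ (since the logit errors from all windows propagate into window $m$ through the smoother), rather than appearing as a separate per-window application of Corollary \ref{corollary1}; this does not change the order of the bounds.
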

\begin{proof}
The proof is given in Appendix \ref{Appendix_B}.
\end{proof}

\section{Simulation Studies}\label{sec:sim}
First, we applying the PPMT-ESD estimator to simulated data and compare its performance to some of the existing methods. We consider two cases based on multivariate autoregressive processes: 

\emph{Case 1:} Estimating the spectral density matrix of a latent trivariate random process, given binary spiking observations.

\emph{Case 2:} Estimating the spectral density matrix of two processes, where one is directly observable and the other is latent and observed via binary spiking.

As for comparison, we consider two existing methods for extracting spectral representations of spiking data:

\subsubsection*{State-Space ESD Estimator}

This estimator closely follows the approach in \cite{Smith}. Recall that the observations are $n_{k}^{(l)} \sim \operatorname{Bernoulli}({\lambda}_k)$, where ${\lambda}_k = \operatorname{logistic} \, (x_k)$. Using the same notation as in \cite{Smith}, we model the latent process $x_k$ as a first-order autoregressive model, $x_k = x_{k-1} + \epsilon_k$, where $\epsilon_k \overset{i.i.d.}{\sim} \mathcal{N}\, (0, \sigma_{\epsilon}^2)$ for $1 \leq k \leq K$. Following an EM algorithm developed in \cite{Smith}, the MAP estimate of $x_k$ given the observed data, $x_{k|K}$, is obtained. Then, the direct multitaper PSD of the estimated process $x_{k|K}$ is taken as the PSD estimate of $x_k$. For the multivariate non-stationary case, for each process $x_{k,j}$, $1 \leq j \leq J$, we assume joint stationarity within non-overlapping consecutive windows of length $W$. The MAP estimate of each process $\mathbf{x}_{m,j} = [x_{(m-1)W + 1,j},  x_{(m-1)W + 2,j}, \cdots, x_{mW,j}]^\top$, for $1 \leq m \leq M$ and $1 \leq j \leq J$ are obtained, and an estimate of the ESD matrix is derived using the corresponding estimators. We refer to this estimator as the State-Space ESD (SS-ESD) estimator. 

\subsubsection*{Peristimulus Time Histogram ESD Estimator}

This estimator is derived by directly considering the ensemble mean of the spiking observations $\overline{n}_{k,j}$, referred to as the peristimulus time histogram (PSTH), as an estimate of the random signal $x_{k,j}$, for $1 \leq k \leq K$ and $1 \leq j \leq J$ \cite{Halliday1999}. With a similar joint stationarity assumption in windows of length $W$, the non-overlapping sliding window multitaper spectral estimate of the PSTH forms the PSTH-ESD estimator. 

In order to benchmark our comparison, we consider the theoretical spectra of the AR processes derived using closed-form expressions (True ESD) as well as the the non-overlapping sliding window direct multitaper estimates of the processes $x_{k,j}$ that have been used to generate the spikes. We refer to the later benchmark as the Oracle ESD, as if an oracle could directly observe the latent processes and estimate their ESD.

\vspace{-5mm}
\subsection{Case 1: Latent trivariate process observed through spiking}\label{sec:sim1}
\setcounter{figure}{0}
\begin{figure}[b!] 
	\centering
	\includegraphics[width=0.9\columnwidth]{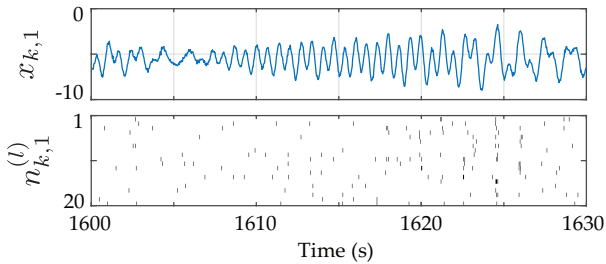}
	\caption{\small Samples of the signal $x_{k,1}$ (top) and the raster plot of the corresponding spikes (bottom) from $t = 1600$~s to $t = 1630$~s.}
	\label{[Fig: 1]}
\end{figure}

\setcounter{figure}{1}
\begin{figure*}[t!] 
	\centering
	\includegraphics[]{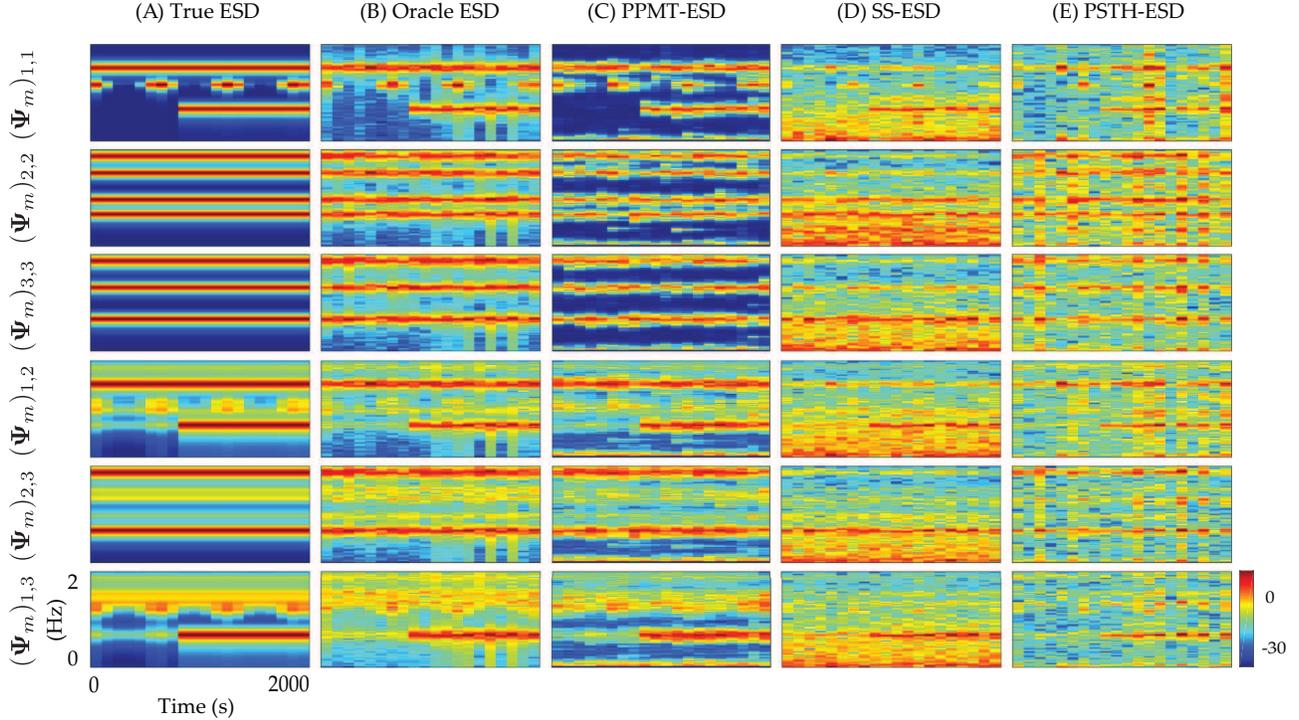}
	\vspace{-1mm}
	\caption{\small ESD estimation results for Case 1. Each panel shows the magnitude of the spectrogram in dB scale. Columns from left to right: $(A)$ True ESD, $(B)$ Oracle ESD, $(C)$ PPMT-ESD, $(D)$ SS-ESD, and $(E)$ PSTH-ESD. Rows from top to bottom: $(\bm{\Psi}_{m})_{1,1}(\omega)$, $(\bm{\Psi}_{m})_{2,2}(\omega)$, $(\bm{\Psi}_{m})_{3,3}(\omega)$, $(\bm{\Psi}_{m})_{1,2}(\omega)$,  $(\bm{\Psi}_{m})_{2,3}(\omega)$, and $(\bm{\Psi}_{m})_{1,3}(\omega)$.}
	\label{[Fig: 2]}
	\vspace{-4mm}
\end{figure*}

In order to simulate the three latent processes ($J=3$) with spectral couplings, we consider different linear combinations of a set of AR(6) processes, $\{ \{y_{k}^{(i)}\}_{k=1}^{K}, \, 1 \leq i \leq 6\}$, where $y_k^{(i)}$ is tuned around the frequency $f_i$, with $f_1 = 1.15 $ Hz, $f_2 = 0.95 $ Hz, $f_3 = 1.3 $ Hz, $f_4 = 1.5 $ Hz, $f_5 = 0.65 $ Hz and $f_6 = 1.85 $ Hz. All signals have been sampled at a sampling rate of $f_s = 32$ Hz, for a total duration of $2000$ seconds ($K = 64000$). The trivariate random process is defined as:
\begin{align}
\nonumber & \resizebox{\columnwidth}{!}{$\displaystyle x_{k,1} =  y_{k}^{(1)} \cos\left(2\pi {\textstyle \frac{f_0}{f_s}} k\right) + 1.2 y_{k}^{(4)} + 1.2 y_{k}^{(5)} u_{k - 0.4 K} + \sigma_{x1} \, \nu_{1,k} + x_{1,dc}$}\\
\nonumber &\resizebox{\columnwidth}{!}{$\displaystyle x_{k,2} =  0.83 \, y_{k}^{(2)} + 0.83 \, y_{k - 6}^{(4)} + 0.83 \, y_{k}^{(5)} + 0.83 \, y_{k}^{(6)}  + \sigma_{x2} \, \nu_{2,k}  + x_{2,dc}$} \\
\nonumber &\resizebox{\columnwidth}{!}{$\displaystyle x_{k,3} =  y_{k}^{(3)} + y_{k}^{(5)} + y_{k-10}^{(6)} \, u_{\, 0.5 K - 1 - k} + y_{k}^{(6)} \, u_{k - 0.5 K} + \sigma_{x3} \, \nu_{3,k}  + x_{3,dc}$}
\end{align}
where $x_{i,dc}$ are the DC components, $u_k$ is the unit step function, $\nu_{i,k}$ is a zero mean white Gaussian noise with unit variance, and $\sigma_{i}$ is a scaling standard deviation to set the SNR of all signals at $20$~dB, for $i=1,2$ and $3$.

In words, $x_{k,1}$ has been formed by combining three AR components, $y_{k}^{(1)}$, $y_{k}^{(2)}$ and  $y_{k}^{(5)}$. The component related to $y_{k}^{(1)}$ has been amplitude modulated by a low frequency cosine signal at $f_0 = 0.0008$ Hz. The signal initially consists of only  $y_{k}^{(1)}$ and  $y_{k}^{(4)}$, and the third component $y_{k}^{(5)}$ is added to $x_{k,1}$ after $800$ seconds. The process $x_{k,2}$ is composed of four AR components, $y_{k}^{(2)}$, $y_{k}^{(4)}$, $y_{k}^{(5)}$ and $y_{k}^{(6)}$, with the contribution from $y_{k}^{(4)}$ having a lag of $6$ samples. The third process $x_{k,3}$ consists of the three AR components of $y_{k}^{(3)}$, $y_{k}^{(5)}$ and $y_{k}^{(6)}$. Although $y_{k}^{(6)}$ appears with a lag of $10$ samples initially, it becomes in phase with after $1000$ seconds.

 We generate spike trains for $L = 20$ neurons per each of the latent processes, using the logistic link model of Eq. (\ref{eq:logistic}). All DC components in the CIF have been set to $-5.5$, so that the average spiking rate of the ensemble corresponding to each CIF is $\approx 0.28$ spikes/second, consistent with the low spiking rate of experimentally recorded data. A $30$ second sample window of the process $x_{k,1}$ and the corresponding raster plot of the spiking observations is shown in Fig. \ref{[Fig: 1]}.

We take the window length of stationary to be $100$ seconds ($W = 3200$), resulting in a total number of $M=20$ windows.  The parameter $\alpha$ has been fixed at $0.4$ to have optimal dependency across time windows and the prior parameter $\rho$ has been set to $0.2$. Note that these hyper-parameters can be further fine tuned using cross validation, in order to achieve a higher precision. Furthermore, we set $N = 800$ in order to have a densely sampled spectral representation. Given that the spectral range is known to be $[0,2]$~Hz, we only use the first $N_{\max} = 100$ frequency bins of the matrix $A$ for the sake of computational complexity. Finally, the time-bandwidth product of the multitaper framework is chosen as $2$ ($\xi = 2$), and the first three dpss tapers are used.

Fig. \ref{[Fig: 2]} shows the estimation results corresponding to this case. The results have been formatted as a grid, with the columns representing (from left to right) the True EDS, Oracle ESD, PPMT-ESD, SS-ESD, and PSTH ESD estimates. The rows represent (from top to bottom) $(\bm{\Psi}_{m})_{1,1}(\omega)$, $(\bm{\Psi}_{m})_{2,2}(\omega)$, $(\bm{\Psi}_{m})_{3,3}(\omega)$, $(\bm{\Psi}_{m})_{1,2}(\omega)$, $(\bm{\Psi}_{m})_{2,3}(\omega)$ and $(\bm{\Psi}_{m})_{1,3}(\omega)$, where $(\bm{\Psi}_{m})_{i,j}(\omega)$ denotes the magnitude of the $(i,j)^{\sf th}$ block of the ESD matrix. Moreover, in order to have a closer inspection, the magnitude of the spectra corresponding to a window of $t = 700$~s to $t = 800$~s for $(\bm{\Psi}_{m})_{1,1}(\omega)$, $(\bm{\Psi}_{m})_{2,2}(\omega)$ and $(\bm{\Psi}_{m})_{1,2}(\omega)$ are depicted in Fig. \ref{[Fig: 3]}. 

\setcounter{figure}{2}
\begin{figure}[t!] 
\vspace{-2mm}
	\centering
	\includegraphics[]{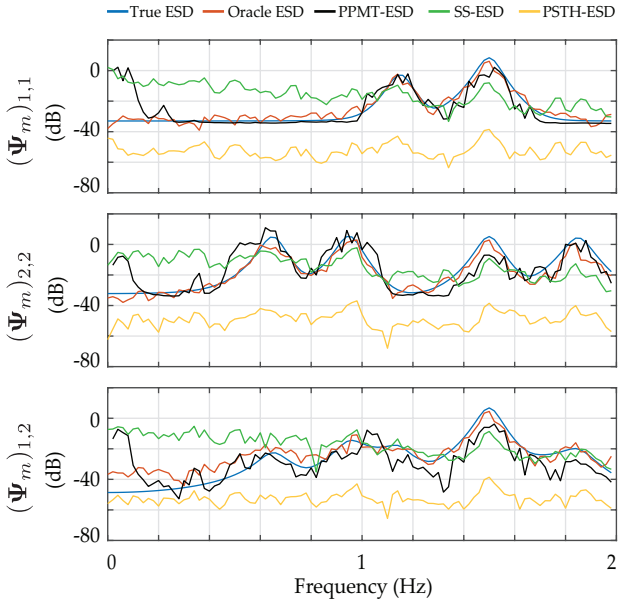}
	\vspace{-2mm}
	\caption{\small A snapshot of the spectrograms of Fig. \ref{[Fig: 2]} at the $8^{\sf th}$ window ($t = 700$~s--$800$~s). Rows from top to bottom: $(\bm{\Psi}_{m})_{1,1}(\omega)$, $(\bm{\Psi}_{m})_{2,2}(\omega)$, and $(\bm{\Psi}_{m})_{1,2}(\omega)$.}
	\label{[Fig: 3]}
	\vspace{1mm}
\end{figure}

It can be observed that the proposed PPMT-ESD estimator (Fig.  \ref{[Fig: 2]}(C)) results in much less background noise compared to all the others, while precisely capturing the dynamic evolution of the spectra and properly resolving the various frequency components. The latter is more evident from Fig. \ref{[Fig: 3]}, where the PPMT-ESD (black trace) closely matches the true ESD (blue trace) on par with the Oracle ESD (red trace), while the SS-ESD (green trace) and PSTH-ESD (orange trace) show significant bias and variability. It is worth mentioning that the erroneous spectral peak near the DC component in Fig. \ref{[Fig: 3]} (black trace) is due to the estimation error of the DC component in the CIF model, in the low spiking regime of our setting. Nevertheless, this peak appears in the SS-ESD and PSTH-ESD estimates as well, but is mitigated by increasing the spiking rate. 

Due to the time-domain smoothing model used in the SS-ESD estimator, the ESD rapidly decays with frequency (Fig. \ref{[Fig: 3]}, green trace). As such, the SS-ESD estimate (Fig. \ref{[Fig: 2]}(D)) heavily amplifies non-existing low frequency components that arise from the intrinsic noise in spiking observations, while suppressing the higher frequency components that exist in the true ESD (Fig. \ref{[Fig: 2]}(A)). Similar, the estimate of the PSTH-ESD shown in Fig. \ref{[Fig: 2]}(E) fails to capture most of the temporal and spectral features of the ESD, since it does not account for the binary nature of the observations. To quantify the performance of these estimators, we repeated this numerical experiment for a total of $50$ trials, generating independent realizations of the AR processes and spiking observations per trial, and computed the Mean Squared Error (MSE) with respect to the the True ESD (in dB scale). The average and variance of the MSE values are presented in Table \ref{tab1}. All MSE computations have been normalized with respect to the total power of the True ESD (in dB scale). As expected, the Oracle ESD achieves the lowest MSE. Among the three methods that use the spiking observations, our proposed PPMT-ESD estimator achieves the lowest MSE, followed by the SS-ESD estimator with a significant gap. The PSTH-ESD estimator exhibits the poorest performance, in terms of both average MSE and variance, which is also visually evident in Fig. \ref{[Fig: 3]}. In the spirit of easing reproducibility, a MATLAB implementation that regenerates the data, results and figures outlined in this section has been made publicly available on the open source repository GitHub \cite{GitHub_repository}.

\vspace{-2mm}
\begin{figure*}[t!] 
	\centering
	\includegraphics[]{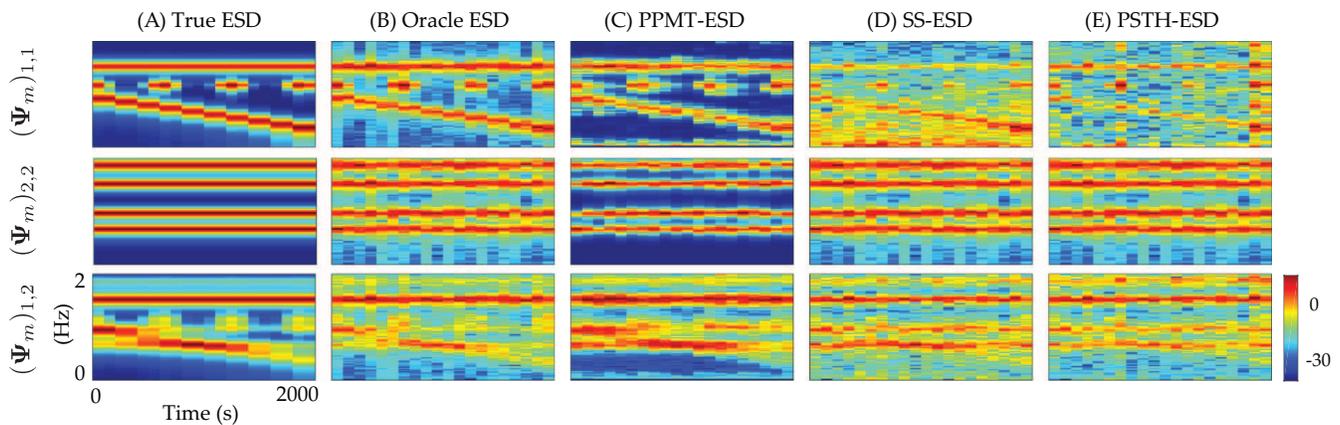}
	\caption{\small ESD estimation results for Case 2. Each panel shows the magnitude of the spectrogram in dB scale. Columns from left to right: $(A)$ True ESD, $(B)$ Oracle ESD, $(C)$ PPMT-ESD, $(D)$ SS-ESD, and $(E)$ PSTH-ESD. Rows from top to bottom: $(\bm{\Psi}_{m})_{1,1}(\omega)$, $(\bm{\Psi}_{m})_{2,2}(\omega)$, and $(\bm{\Psi}_{m})_{1,2}(\omega)$.}
	\label{[Fig: 4]}
	\vspace{-4mm}
\end{figure*}

\begin{table} [!ht]
\vspace{-2mm}
\caption{\small Comparison of Relative MSE Performance} 
\label{tab1}
\centering
\resizebox{0.9\columnwidth}{!}{\begin{tabular}{ | c | c | c |} 
\hline
Estimation method & Average MSE & Variance of MSE \\ 
\hline
Oracle ESD  & 0.0485 & $8.2612 \times 10^{-6}$ \\ 
%\hline
PPMT-ESD  & 0.1864 & $7.4279 \times 10^{-5}$\\ 
%\hline
SS-ESD  & 0.3911 & $1.3991 \times 10^{-5}$\\ 
%\hline
PSTH-ESD  & 1.4794 & $1.1 \times 10^{-3}$\\ 
\hline
\end{tabular}}
\vspace{-4mm}
\end{table}

\vspace{-2mm}
\subsection{Case 2: Latent bivariate process with one directly observable component}\label{sec:sim2}

While Case 1 was a natural choice for performance comparison, Case 2 is of particular interest in the joint analysis of neural spiking and continuous signals, such as the local field potential (LFP). The LFP signal corresponds to the electrical field potential measured at the cortical surface, and mesoscale dynamics of cortical activity.

In this case, we consider a bivariate random process, whose first component $x_{k,1}$ is observed through spiking activity $\{ n_{k,1}^{(l)} \}_{k,l = 1}^{K, L}$, while its second component $x_{k,2}$ is directly observable in i.i.d. zero-mean Gaussian noise, i.e., $\widetilde{x}_{k,2} := x_{k,2} + \nu_k$, with $\nu_k \sim \mathcal{N}\, (0, \sigma_{\nu}^2)$. Using the same notations as before, the two processes used in this simulation are,
\begin{equation}
\begin{split}
 \nonumber & \resizebox{0.83\columnwidth}{!}{$\displaystyle x_{k,1} = y_{k}^{(1)} \cos\left(2\pi {\textstyle \frac{f_0}{f_s}} k\right) + y_{k}^{(4)} + y_{k}^{(7)} + \sigma_{x1}  \nu_{1,k} + x_{1,dc}$} \\
 & \resizebox{\columnwidth}{!}{$\displaystyle x_{k,2} = 0.83 \, y_{k}^{(2)} + 0.83 \, y_{k - 6}^{(4)} + 0.83 \, y_{k}^{(5)} + 0.83 y_{k}^{(6)} + \sigma_{x2} \nu_{2,k}  + x_{2,dc}.$}
\end{split}
\label{eq52}
\end{equation}

The process $x_{k,2}$ considered in this scenario is exactly the same as that described in Case 1. Process $x_{k,1}$ is a slightly modified version of the process $x_{k,1}$ in Case 1: the first two components of $x_{k,1}$ are the same as described before; the amplitude modulated component related to $y_k^{(1)}$ and the fixed frequency AR component $y_k^{(4)}$. However, instead of the third component $y_k^{(5)}$, we include a frequency modulated component, $y_k^{(7)}$, which is an AR process tuned around the frequency $f_7$. The frequency $f_7$ changes by decrements of $0.06$~Hz every $200$ seconds, starting at $0.9$~Hz at $t = 0$~s.

The ESD matrix of this bivariate process can be estimated by Algorithm 2 with a slight modification in the Forward filtering step (step 2) of Algorithm 1: Given that the second process is directly observable, the distribution $f(\mathcal{D}_1^m | \{ \mathbf{V} \}_1^m , \widehat{\bm{\theta}}^{(r)})$ needs to be modified, and accordingly, the log-posterior in Eq. (\ref{eq14}) changes to:  
\begin{equation}
\begin{split}
\nonumber & \resizebox{\columnwidth}{!}{$\displaystyle \sum_{s,w = 1}^{m,W}  L \,\Big\{\overline{n}_{(s-1)W+w,1} (\mathbf{A}_s \mathbf{V}_s)_{w,1}   \, - \log \, (1 + \exp(\mathbf{A}_s \mathbf{V}_s)_{w,1}))\Big\}$}\\
\nonumber  & \resizebox{\columnwidth}{!}{$\displaystyle - \sum_{s, w = 1}^{m, W} \frac{1}{2 \sigma_{\nu}^2} \Big( \widetilde{x}_{(s-1)W+w,2} \!-\! (\mathbf{A}_s \mathbf{V}_s)_{w,2} \Big)^2 \!\!\!-\! \frac{1}{2} \, \sum_{s = 1}^{m} (\mathbf{w}_{s} \!-\!  \mathbf{\Phi} \mathbf{w}_{s-1})^T (\mathbf{Q}_s^{(r)})^{-1} (\mathbf{w}_{s} \!-\!  \mathbf{\Phi} \mathbf{w}_{s-1}) \Bigg).$}
\end{split}
\label{eq53}
\end{equation}
The parameters used in forming the estimates are the same as those chosen for Case 1. Fig. \ref{[Fig: 4]} shows the estimation results corresponding to this case. The results have been similarly formatted as a grid, with the columns representing (from left to right) the True EDS, Oracle ESD, PPMT-ESD, SS-ESD, and PSTH ESD estimates. The rows represent (from top to bottom) $(\bm{\Psi}_{m})_{1,1}(\omega)$, $(\bm{\Psi}_{m})_{2,2}(\omega)$, and $(\bm{\Psi}_{m})_{1,2}(\omega)$. Note that in this case, we take the SS-ESD and PSTH-ESD estimates of $(\bm{\Psi}_{m})_{2,2}(\omega)$ to be the same as its Oracle ESD estimate, given that these methods are based on estimating the process $x_{k,2}$ in time domain (which is directly observable here). 

Similar to Case 1, the proposed PPMT-ESD estimator (Fig. \ref{[Fig: 4]}(C)) captures the dynamics of the spectra $(\bm{\Psi}_{m})_{1,1}(\omega)$ and $(\bm{\Psi}_{m})_{1,2}(\omega)$ accurately, closely matching the True ESD (Fig. \ref{[Fig: 4]}(A)). As before, the SS-ESD estimator (Fig. \ref{[Fig: 4]}(D)) is not able to capture the ESD dynamics, especially at high frequencies. Though some frequency components at certain time windows are detected by the PSTH-ESD estimates (Fig. \ref{[Fig: 4]}(E)), most of the frequency content is concealed by background noise.

\vspace{-2mm}
\section{Application to experimentally-recorded data from general anesthesia}\label{sec:data}

Finally, we apply our proposed PPMT-ESD estimator to multi-unit recordings from a human subject under Propofol-induced general anesthesia (data from \cite{lewis2013local,Miran}). The data set includes the spiking activity of $41$ neurons as well as the LFP recorded from a patient undergoing intra-cranial monitoring for surgical treatment of epilepsy using a multichannel micro-electrode array implanted in temporal cortex \cite{lewis2013local}. Recordings were conducted during the administration of Propofol for induction of anesthesia. The experimental protocol is extensively explained in \cite{lewis2013local}. Since the original recordings have been over-sampled, we down-sample both the LFP and spike recordings to the same sampling frequency of $25$~Hz \cite{Miran}. 

From the spike recordings, we select $25$ neurons ($L = 25$) with the highest spiking rates for analysis. The average spiking rate of this subpopulation of neurons is $0.1732$ spikes/second. Fig. \ref{[Fig: 5]}(a) shows the LFP signal for a total duration of $800$~s considered in the analysis. A zoomed-in view of the LFP and the raster plot of the neuronal ensemble corresponding to a $40$~s window from $t = 480$~s to $t = 520$~s is shown in Fig. \ref{[Fig: 5]}. 

\begin{figure}[!ht] 
	\centering
	\vspace{-2mm}
	\includegraphics[width=0.9\columnwidth]{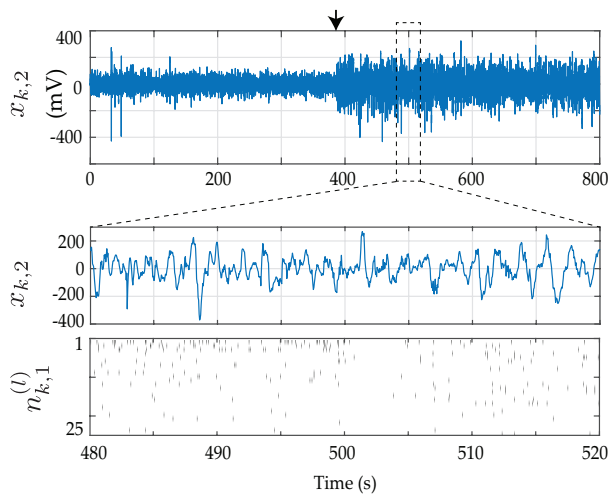}
	\vspace{-1mm}
	\caption{\small Top panel shows the LFP recording used in the analysis. The downward arrow marks the induction of the anesthetic. The bottom rows show the zoomed-in view of the LFP and the spike trains for $t = 480$~s to $t = 520$~s.}
	\label{[Fig: 5]}
\end{figure}

Similar to Case 2 in Section \ref{sec:sim2}, we model the neuronal ensemble and the LFP by a bivariate process, where the LFP is directly observable, but the latent process driving the neuronal activity is observed through spiking. We assume the process to be stationary within windows of length $40$~s ($W = 1000$), resulting in $M=20$ non-overlapping windows. Given that the main spectral content is known to be in the range of $0$--$2$ Hz \cite{lewis2013local,Miran}, we estimate the ESDs up to $2.5$~Hz ($N_{\max} = 100$). Further, we choose $\xi=2$, and use the first three tapers in the analysis. We set $\alpha = 0.85$ and $\rho = 0.02$.     

Fig. \ref{[Fig: 6]} shows the results of our analysis. The first column illustrates the ESD estimates, where the dashed vertical line marks the induction of anesthetic. The second column shows snapshots of the normalized magnitude spectrum corresponding to a $40$~s window from $t = 480$~s to $t = 520$~s. The first three rows (from top to bottom) show the estimates of $(\bm{\Psi}_{m})_{1,1}(\omega)$, i.e., the ESD of the latent process driving spiking activity, for PSTH-ESD, SS-PSD, and the proposed PPMT-ESD, respectively. The fourth row represents the Oracle ESD estimate of the LFP signal ($(\bm{\Psi}_{m})_{1,1}(\omega)$), and the fifth row depicts ($(\bm{\Psi}_{m})_{1,2}(\omega)$), the cross-spectra between the LFP and the latent process, using the proposed PPMT-ESD method.

\begin{figure}[!ht] 
	\centering
	\includegraphics[width=0.95\columnwidth]{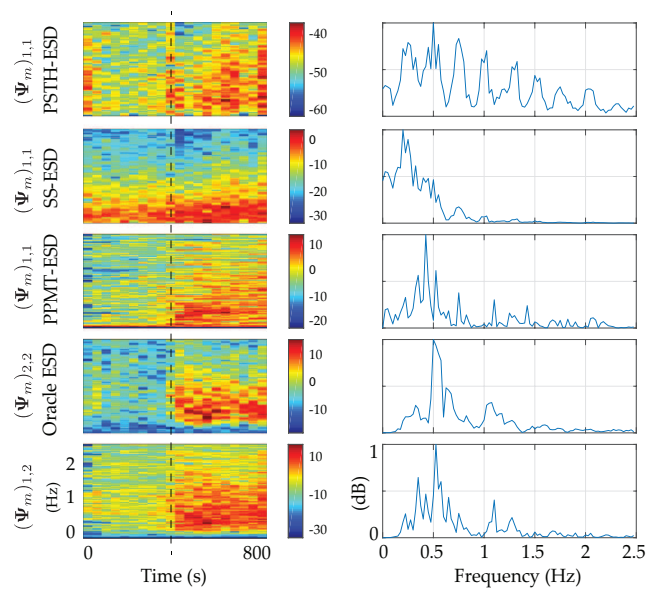}
	\caption{\small ESD analysis of the real data from anesthesia. Left column shows the magnitude of the spectrograms in dB scale, and the right column shows a snapshot of the normalized ESD estimates for a $40$~s window starting at $t = 480$~s. Rows from top to bottom: PSTH-ESD, SS-ESD, PPMT-ESD, the Oracle ESD of the LFP signal, and the PPMT-ESD estimate of the cross-spectrum between the spiking observations and the LFP.}
	\label{[Fig: 6]}
\end{figure}

Similar to our simulation studies, we observe that the PSTH-ESD estimate (Fig. \ref{[Fig: 6]}, first row) captures a significant number of spurious frequency components and harmonics, which are known to be absent during Propofol general anesthesia \cite{lewis2013local}, masking the relevant spectral content. Also, as before, the SS-ESD estimate amplifies the low-frequency components (Fig. \ref{[Fig: 6]}, second row). Even though the dominant spectral content of the LFP is around $0.5$~Hz and $1.1$~Hz, as observed in the right panel of the fourth row in Fig. \ref{[Fig: 6]}, neither of these peaks are present in the SS-ESD estimate (Fig. \ref{[Fig: 6]}, second row, right panel). In addition, the rapid change in the LFP spectrogram (fourth row, left panel) marked by the dashed line, is not captured by the SS-ESD estimate (second row, left panel).

The proposed PPMT-ESD estimate (third row) closely resembles the spectrum of the LFP (fourth row), evident in both the spectrogram and the spectral snapshot. The dominant frequency components of the PMTM-ESD around $0.5$~Hz and $1.1$~Hz match those in the LFP spectrum. The temporal variations of the spectrum are also consistent with that of the LFP, as visible in the left panels. The estimated cross-spectrum (fifth row) further corroborates this observation, by capturing the spectral coupling between the LFP and the latent process driving the spiking activity. It is indeed hypothesized that the latent process driving the spiking activity is the LFP signal itself \cite{lewis2013local}. As such, our analysis corroborates this hypothesis by providing an accurate estimate of the ESDs of both the LFP and the latent process driving the neuronal spiking. 

\vspace{-2mm}
\section{Concluding Remarks}\label{sec:con}
\vspace{-1mm}
Brain oscillations are known to play a significant role in modulating the finescale dynamics of neuronal spiking. These oscillations are also known to be non-stationary, as they reflect the changes in the internal states of behavioral conditions. Understanding how the latent processes that drive spiking activity are related to brain oscillations is a key problem in computational neuroscience. On one hand, the theory of point processes has been successfully utilized in recent years to capture the dynamics of neuronal spiking data. On the other hand, spectral estimation of non-stationary continuous signals has been widely studied. Existing methods either treat the spiking data as continuous, and apply non-stationary spectral estimation techniques off-the-shelf, or try to first estimate these latent processes in time-domain, followed by forming spectrograms. Both these approaches are known to suffer from high bias and variability. A unified methodology for inferring spectrotemporal representations of the multivariate latent non-stationary processes that drive neuronal spiking is lacking. In this work, we proposed such a methodology for estimating the ESD matrix of a multivariate non-stationary latent process \emph{directly} from binary spiking observations. To this end, we integrated techniques from state-space modeling, multitaper analysis and point processes. 

We established theoretical bounds on the bias and variance performance of the proposed estimator, and compared its performance with the aforementioned existing techniques through application to simulated and experimentally-recorded neural data. Our simulation studies confirmed our theoretical analysis and revealed the favorable performance of our proposed method over existing approaches. Our application to real data corroborated the hypothesis on the role of the local field potential in regulating spiking activity under general anesthesia, by providing a clear picture of the underlying spectral couplings. While we have developed our methodology in the context of neuronal data analysis, it can be applied to a wide range of discrete observations that are modulated by underlying oscillations, such as heart-beat  data \cite{barbieri2005point}. Our methodology can also be extended to infer non-stationary network-level properties such as the frequency domain Granger-Geweke causality \cite{geweke1982measurement,baccala2001partial}. 

\vspace{-2mm}
\section{Acknowledgment}
We would like to thank Emery N. Brown and Patrick L. Purdon for sharing the data from \cite{lewis2013local}.

\appendices
\vspace{-2mm}
 \section{Proof of Theorem \ref{thm1}}
\label{Appendix_A}
\begin{proof}
Let $S(\omega)$ be the PSD of the process $\{x_k\}_{k=1}^K$. Then,
\begin{align}
\begin{split}
& |\operatorname{bias}(\widehat{S}^{\sf mt}(\omega))| \; := \; |\mathbb{E}[\widehat{S}^{\sf mt}(\omega)] - S(\omega)| \\
&  \overset{(a)}{\leq} \, |\mathbb{E}[\widehat{S}^{\sf mt}(\omega) \, - \, S^{\sf mt}(\omega)]| \,  + \, |\operatorname{bias}(S^{\sf mt}(\omega))|, 
\end{split}
\label{eq26}
\end{align}
where $(a)$ follows from the triangle inequality. Further,
\begin{align}
\begin{split}
&  \operatorname{Var}(\widehat{S}^{\sf mt}(\omega)) \; := \; \mathbb{E}[|\widehat{S}^{\sf mt}(\omega) - \mathbb{E}[\widehat{S}^{\sf mt}(\omega)]|^2]  \\  
& \leq \,  \mathbb{E}[|\widehat{S}^{\sf mt}(\omega) \, - \, S^{\sf mt}(\omega)|^2] + \operatorname{Var}(S^{\sf mt}(\omega))  \\
&  +  2  \mathbb{E}[(\widehat{S}^{\sf mt}(\omega) \, - \, S^{\sf mt}(\omega))(S^{\sf mt}(\omega) \,-\, \mathbb{E}[S^{\sf mt}(\omega)])] \\ 
&  \resizebox{0.85\columnwidth}{!}{$\displaystyle \overset{(b)}{\leq} \left\{ \sqrt{\mathbb{E}[|\widehat{S}^{\sf mt}(\omega) \, - \, S^{\sf mt}(\omega)|^2]} \, + \, \sqrt{\operatorname{Var}(S^{\sf mt}(\omega))} \right\}^2$},
 \end{split}
 \label{eq27}
\end{align}
where $(b)$ follows from the Cauchy-Schwarz inequality. Thus, the desired bounds on the bias and variance can be established through bounding the first and second moments of $(\widehat{S}^{\sf mt}(\omega) \, - \, S^{\sf mt}(\omega) )$. The first moment can be bounded by
\vspace{-3mm}
\begin{align}
\nonumber |\mathbb{E}&[\widehat{S}^{\sf mt}(\omega) -  S^{\sf mt}(\omega)]| \overset{(c)}{\leq}  \, \frac{1}{P} \sum_{p = 1}^{P} \left|\mathbb{E} \left[ |\widehat{y}^{(p)}(\omega)|^2  -  |y^{(p)}(\omega)|^2 \right]\right|
\end{align}

\begin{align}
\nonumber \overset{(d)}{\leq} 
& \frac{1}{P}\! \sum_{p = 1}^{P}\! \sum_{k = 1}^{K}\! \sum_{m = 1}^{K} \! \resizebox{0.715\columnwidth}{!}{$\displaystyle \left| \mathbb{E} [ \operatorname{logit} (\overline{n}_{k}) \operatorname{logit} (\overline{n}_{m}) \!-\!  x_k x_m ]  \nu_k^{(p)} \nu_m^{(p)}\, e^{-i \omega (m-k)} \right|$} \\
\nonumber \overset{(e)}{\leq} 
& \frac{1}{P} \sum_{p = 1}^{P} \sum_{k = 1}^{K} \left| \mathbb{E} \left[ (\operatorname{logit} (\overline{n}_{k}))^2 \, - x_k^2 \right]  \right|(\nu_k^{(p)})^2 \\ 
& \!\!\!\!\!\!\!+\! \!\frac{1}{P}\! \sum_{p = 1}^{P} \! \sum_{k \neq m}^{} \!\! \left| \mathbb{E} \left[ \operatorname{logit} (\overline{n}_{k}) \! \operatorname{logit} (\overline{n}_{m}) \!-\! x_k x_m\right] \right| \!\left| \nu_k^{(p)}\! \nu_m^{(p)} \right|\!,
\label{eq28}
\end{align}
where $(c)$ and $(d)$ follow from the triangle inequality and $(e)$ follows by bounding the complex sinusoid. The main technical difficulty in further development of the bounds is due to the fact that $\operatorname{logit}(z)$ does not have a Taylor series expansion for $z \in (0,1)$. We thus have to find other algebraically useful bounds. To this end, we need to establish the following technical lemma.

\begin{lemma} Consider the event $ A =  \{\overline{n}_{k} \; | \; \overline{n}_{k} \neq 0, \, \overline{n}_{k} \neq 1, \, 1 \leq k \leq K \} $. The following inequality holds true for all $\overline{n}_{k} \in A $:
\begin{align}
 \nonumber \varepsilon(\overline{n}_{k}) := \left| \operatorname{logit} (\overline{n}_{k}) - x_k \right| \; \leq \; g(x_k, L) \, \left| \overline{n}_{k} - \lambda_k \right|,
 \end{align}
 where
 \vspace{-2mm}
 \begin{align} 
  \nonumber g(x_k, L) = \max \resizebox{0.7\columnwidth}{!}{$\displaystyle \left\{ \frac{1}{\lambda_k (1 - \lambda_k)}, \, \frac{|\log(L-1) + x_k|}{|\lambda_k - 1/L|}, \frac{|\log(L-1) - x_k|}{|1 - 1/L - \lambda_k|} \right\}$}.
\end{align}
\label{eq29}
\label{lemma1}
\end{lemma}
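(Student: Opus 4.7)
My plan is to reduce the lemma to a statement about the secant slope $\phi(u) := (\operatorname{logit}(u) - x_k)/(u - \lambda_k)$, where I write $h = \operatorname{logit}$ for brevity. The first step would be to identify the three quantities comprising $g(x_k, L)$ as, respectively, $h'(\lambda_k)$ (the limit $\phi(\lambda_k)$), $\phi(1/L)$, and $\phi(1 - 1/L)$. Since $A \subset [1/L, 1 - 1/L]$, it therefore suffices to show that on the continuous interval $[1/L, 1 - 1/L]$, the supremum of $\phi$ is attained at one of the three points $\lambda_k$, $1/L$, or $1 - 1/L$. Exploiting the symmetry $u \leftrightarrow 1 - u$ (under which $h \to -h$, $x_k \to -x_k$, and the second and third terms of $g$ are interchanged), I would assume without loss of generality that $\lambda_k \leq 1/2$.

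The function $h$ is concave on $(0, 1/2]$ and convex on $[1/2, 1)$, with $h'(u) = 1/(u(1-u))$ attaining its minimum at $u = 1/2$. On $[1/L, 1/2]$, concavity of $h$ implies that the secant slope $\phi(u)$ from the fixed point $(\lambda_k, h(\lambda_k))$ is non-increasing in $u$; hence on $[1/L, \lambda_k]$ the maximum is $\phi(1/L)$ (second term of $g$), and on $[\lambda_k, 1/2]$ it is $\lim_{u \to \lambda_k^+} \phi(u) = h'(\lambda_k)$ (first term of $g$).

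The delicate case is $u \in [1/2, 1 - 1/L]$, where the secant crosses the inflection point of $h$ and $\phi$ is not monotone. Here I would use the identity $\phi'(u) = (h'(u) - \phi(u))/(u - \lambda_k)$, so that any critical point $u^*$ satisfies $\phi(u^*) = h'(u^*)$; a short calculation then gives $\phi''(u^*) = h''(u^*)/(u^* - \lambda_k) > 0$ because $h'' > 0$ on $(1/2, 1)$ and $u^* > \lambda_k$. Thus every interior critical point on $(1/2, 1 - 1/L)$ is a strict local minimum, and since two local minima would force an intermediate local maximum (also a critical point), there can be at most one. Consequently the supremum of $\phi$ on $[1/2, 1 - 1/L]$ is attained at a boundary, giving either $\phi(1/2) \leq h'(\lambda_k)$ (by continuity with the previous case) or $\phi(1 - 1/L)$ (third term of $g$). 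Assembling the cases yields $\phi(u) \leq g(x_k, L)$ on all of $[1/L, 1 - 1/L]$, which is the desired bound.

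The main obstacle is the at-most-one-critical-point claim on $(1/2, 1 - 1/L)$; this rests on the $\phi''$ computation at critical points together with the intermediate-value argument that rules out consecutive local minima. Once that structural fact is in place, the remaining work is the routine concavity/convexity case analysis outlined above, combined with bookkeeping of the symmetric half $\lambda_k \geq 1/2$.
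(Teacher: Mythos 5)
Your proof is correct, and while it follows the same broad strategy as the paper's --- restrict to $[1/L,\,1-1/L]$, reduce to $\lambda_k\le 1/2$ by the logit symmetry, and split the interval at $\lambda_k$ and $1/2$ according to the concave/convex structure of $\operatorname{logit}$ --- it differs in a substantive way on the hardest piece. The paper works directly with $\varepsilon(\overline{n}_k)$ and bounds it piecewise by a chord on $[1/L,\lambda_k]$, the tangent at $\lambda_k$ on $[\lambda_k,1/2]$, and, on $[1/2,\,1-1/L]$, the line $\ell$ through $(\lambda_k,0)$ and $(1-1/L,\varepsilon(1-1/L))$, justified by the asserted inequality $\ell(1/2)\ge\varepsilon(1/2)$. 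That assertion is equivalent to $\phi(1-1/L)\ge\phi(1/2)$ in your notation, and it can actually fail: for $\lambda_k$ very small (e.g.\ $\lambda_k=10^{-6}$, $L=100$) one gets $\phi(1/2)\approx 27.6 > \phi(1-1/L)\approx 18.6$. The lemma survives because the final bound takes the maximum of all three terms and $\phi(1/2)\le 1/(\lambda_k(1-\lambda_k))$ always holds, but the paper's third-piece argument as written has a gap there. Your route avoids this entirely: recasting the problem as bounding the secant-slope function $\phi$, handling $[1/L,1/2]$ by monotonicity of secant slopes of a concave function, and handling $[1/2,\,1-1/L]$ by the observation that $\phi''(u^*)=h''(u^*)/(u^*-\lambda_k)>0$ at every interior critical point, so all critical points are strict local minima, there is at most one, and the supremum sits at an endpoint --- where $\phi(1/2)\le h'(\lambda_k)$ and $\phi(1-1/L)$ are each dominated by a term of $g$. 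This is a cleaner and more robust treatment of the region where the secant straddles the inflection point, at the modest cost of the extra bookkeeping needed to justify the at-most-one-critical-point claim.
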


\vspace{-7mm}
\begin{proof}[Proof of Lemma \ref{lemma1}]
First, consider the case $ \lambda_k \leq 0.5$. We bound the function $\varepsilon(\overline{n}_{k})$ in a piece-wise fashion as follows. Note that $\operatorname{logit} (\overline{n}_{k})$ is convex for $\overline{n}_{k} \geq 0.5$ and concave for $\overline{n}_{k} \leq 0.5$. Thus, it immediately follows that for $\overline{n}_{k} \leq \lambda_k$, $\varepsilon(\overline{n}_{k})$ is convex and hence,
\begin{equation}
\begin{split}
\varepsilon(\overline{n}_{k}) \quad \leq \quad \frac{|\log(L-1) + x_k|}{|\lambda_k - 1/L|} \, \left(\lambda_k  - \overline{n}_{k}\right).
\end{split}
\label{eq30}
\end{equation}

Furthermore, for $\lambda_k \leq \overline{n}_{k} \leq 0.5$, $\varepsilon(\overline{n}_{k})$ is concave, and hence is bounded by the tangent at $\lambda_k$ as follows.
\begin{equation}
\begin{split}
\varepsilon(\overline{n}_{k}) \quad \leq \quad \frac{1}{\lambda_k (1 - \lambda_k)} \, \left(\overline{n}_{k} - \lambda_k \right)
\end{split}
\label{eq31}
\end{equation}

Finally, for the case of $\overline{n}_{k} \geq 0.5$, consider the line
\begin{equation}
\begin{split}
\ell(\overline{n}_{k}) := \frac{|\log(L-1) - x_k|}{|1 - 1/L - \lambda_k|} \, (\overline{n}_{k} - \lambda_k ).
\end{split}
\label{eq32}
\end{equation}
From the convexity of $\varepsilon(\overline{n}_{k})$, $\ell(\overline{n}_{k})$ upper bounds $\varepsilon(\overline{n}_{k})$ for $\overline{n}_{k} \geq 0.5$, since $\ell(0.5) \geq \varepsilon(0.5)$ for $ \lambda_k \leq 0.5$. Combining the piece-wise bounds in Eqs. (\ref{eq30}), (\ref{eq31}) and (\ref{eq32}), we conclude the claim in Lemma \ref{lemma1} for $\lambda_k \le 0.5$. Due to the symmetry of $\varepsilon(\overline{n}_{k})$, through a similar argument, the bound can be established for $\lambda_k > 0.5$, which concludes the proof.
\end{proof}
\vspace{-2mm}

Given that $|x_k| \; \leq \, B$ and assuming that $L$ is large enough so that $L \geq 2(1 + \exp(B))$, we can further simplify the bound of Lemma \ref{lemma1}. We have:
\begin{equation}
\nonumber \begin{split}
& g(x_k, L) \leq \max \resizebox{0.73\columnwidth}{!}{$\displaystyle \left\{ \exp(B)\, (1 + \exp(-B))^2, \frac{|\log(L-1) + B|}{(1 / (1 + \exp(B)) - 1/L)} \right\}$} \\ 
& \leq  \max \Big\{ \exp(B)\, (1 + \exp(-B))^2,  4(1 + \exp(B)) \log L \Big\}.
\end{split}
\label{eq33}
\end{equation}
Thus, for sufficiently large $L$, we conclude that, 
\begin{equation}
\begin{split}
\varepsilon(\overline{n}_{k}) \leq 4(1 + \exp(B)) \log L \left| \overline{n}_{k} - \lambda_k \right|.
\end{split}
\label{eq34}
\end{equation}
Now, consider the expectations in the bounds of Eq. (\ref{eq28}). Using iterated conditioning, 
\begin{equation}
\begin{split}
&\left| \mathbb{E} [ (\operatorname{logit} (\overline{n}_{k}))^2 \, - x_k^2 ]  \right|
=  \left| \mathbb{E}[ \mathbb{E}[(\operatorname{logit} (\overline{n}_{k}))^2 | x_k] \, - x_k^2 ]   \right| \\
&  = \left|\mathbb{E}[ 2 x_k \mathbb{E}[(\operatorname{logit} (\overline{n}_{k}) - x_k) | x_k]+ \mathbb{E}[(\operatorname{logit} (\overline{n}_{k}) \!-\! x_k)^2 | x_k]  ] \right| \\ 
&\overset{(f)}{\leq} \!\!\mathbb{E}[ 2|x_k| \mathbb{E}[ |\operatorname{logit} (\overline{n}_{k}) \!-\! x_k| | x_k]] \!+\! \mathbb{E}[\mathbb{E}[(\operatorname{logit} (\overline{n}_{k}) \!-\! x_k)^2 | x_k]],
\end{split}
\label{eq35}
\end{equation}
where $(f)$ follows from triangle and Jensen's inequalities. In order further simplify these bounds, we invoke the result of Lemma \ref{lemma1}. First, note that $\operatorname{logit} (\overline{n}_{k})$ is unbounded in the complement of event $A$. Provided $|x_k| \, \leq \, B$, $\mathbb{P}(\overline{n}_{k} \neq 0)$ and $\mathbb{P}(\overline{n}_{k} \neq 1)$ can be lower bounded by $1 - \exp(-L \log(1 + \exp(-B)))$, which implies that 
\vspace{-2mm}
\begin{equation}
\begin{split}
\nonumber \mathbb{P}(A) \quad \geq \quad 1 - 2 \exp(-L \log(1 + \exp(-B))).
\end{split}
\label{eq36}
\end{equation}
Therefore, for sufficiently large $L$, we see that $\mathbb{P}(A)$ is exponentially close to $1$. Thus,  hereafter we condition the expectations on the highly probable event $A$. From Eq. (\ref{eq34}), we get
\vspace{-2mm}
\begin{equation}
\begin{split}
\nonumber \mathbb{E}[\, |\operatorname{logit}( \overline{n}_{k}) - x_k| \,| \, x_k, A] \,  \leq \, &4(1 + \exp(B))\log L  \\
& \mathbb{E}[\, \left| \overline{n}_{k} - \lambda_k \right| \, | \,  x_k, A ].
\end{split}
\label{eq37}
\end{equation}
Note that the random variable $n_k = L \overline{n}_{k}$ is the sum of $L$ independent Bernoulli random variables given $x_k$. Thus, given $x_k$, $n_k \sim \operatorname{Binomial} (L, \lambda_k)$. Accordingly, 
\begin{align}
\nonumber \mathbb{E}[\, \left| \overline{n}_{k} - \lambda_k \right| \, | \,  x_k, A ] &= \mathbb{E}[\, \left| \overline{n}_{k} - \lambda_k \right| \mathbbm{1}_A\, | \,  x_k] / \, \mathbb{P}(A)\\ 
\nonumber & \overset{(g)}{\leq} \quad \sqrt{\mathbb{E}[\, ( \overline{n}_{k} - \lambda_k )^2 \mathbbm{1}_A\, | \,  x_k] }\, / \, \mathbb{P}(A)  \\
& \overset{(h)}{\leq}  \quad \sqrt{\lambda_k (1 - \lambda_k)} / (\sqrt{L}\mathbb{P}(A)), 
\label{eq38}
\end{align}
where $(g)$ follows from the Jensen's inequality and $(h)$ follows from substituting expression for the variance of a binomial random variable. Further, note that $\lambda_k (1 - \lambda_k) \leq 1/4$,  for $\lambda_k \in [0, 1]$ and $ \mathbb{P}(A) = 1 - (\lambda_k^L + (1 - \lambda_k)^L) \geq 1/2$, if $(1 + \exp(B))^L > 2(1 + \exp(BL))$, which is satisfied for large enough $L$. Thus, combining the bounds in Eqs. (\ref{eq38}) and (\ref{eq34}), we get,  
\begin{equation}
\begin{split}
\mathbb{E}[\, |\operatorname{logit}( \overline{n}_{k}) - x_k| \,| \, x_k, A] \, \leq \, 4(1 + \exp(B)) \, \frac{\log L}{\sqrt{L}}. 
\end{split}
\label{eq39}
\end{equation}
By a similar argument we can show that,
\begin{equation}
\nonumber\begin{split}
\mathbb{E}[\, (\operatorname{logit}( \overline{n}_{k}) - x_k)^2 \,| \, x_k, A] \, \leq \,  \resizebox{0.4\columnwidth}{!}{$\displaystyle 8(1 + \exp(B))^2 \, \left(\frac{\log L}{\sqrt{L}} \right)^2 $}.
\end{split}
\end{equation}

Thus, the expectation in Eq. (\ref{eq35}) is bounded as:
\vspace{-2mm}
\begin{align}
\nonumber \left| \mathbb{E} [ (\operatorname{logit} (\overline{n}_{k}))^2 \, - x_k^2 \, | \, A]  \right| &
\, \leq \, 8(1 + \exp(B)) \, \frac{\log L}{\sqrt{L}}  \\
& \hspace{-8mm} \times \resizebox{0.35\columnwidth}{!}{$\displaystyle \left( B + (1 + \exp(B)) \, \frac{\log L}{\sqrt{L}} \right) $} . 
\label{eq41}
\end{align}
Following a similar argument, one can show for $n \neq m$,
\begin{align}
\nonumber \big| \mathbb{E} [ \operatorname{logit} (\overline{n}_{k}) & \operatorname{logit} (\overline{n}_{m}) \, - x_k x_m \, | A\, ] \big| \, \leq \, 8(1 + \exp(B)) \\
& \times \resizebox{0.42\columnwidth}{!}{$\displaystyle \frac{\log L}{\sqrt{L}} \, \left( B + 2 (1 + \exp(B)) \, \frac{\log L}{\sqrt{L}} \right) $}. 
\label{eq42}
\end{align}
Finally, using the bounds of Eqs. (\ref{eq41}) and (\ref{eq42}) and noting that $\sum_{k = 0}^{K-1} (\nu_k^{(p)})^2 = 1$, for all $0 \leq p \leq P$, we can upper bound the expectation in Eq. (\ref{eq28}) as,
\begin{equation}
\begin{split}
|\mathbb{E}[\widehat{S}^{\sf mt}(\omega) \, - \, S^{\sf mt}(\omega)\, | \, A]|  \; \leq \;   g_1 K \, \frac{\log L}{\sqrt{L}} \,,
\end{split}
\label{eq43}
\end{equation}
where
\begin{equation}
\nonumber g_1 := \resizebox{0.9\columnwidth}{!}{$\displaystyle 8(1 + \exp(B)) \left\{ \left( \frac{1}{K} + 1 \right) B + \left( \frac{1}{K} + 2 \right) (1 + \exp(B)) \frac{\log L}{\sqrt{L}} \right\}.$}
\end{equation}
This concludes the proof of the bound on bias. Following similar bounding techniques, the second moment in Eq. (\ref{eq27}) can be bounded by:
\begin{equation}
\begin{split}
&\sqrt{\mathbb{E}[|\widehat{S}^{\sf mt}(\omega) \, - \, S^{\sf mt}(\omega)|^2 \, | \, A]} \; \leq \; g_2 K \, \frac{\log L}{\sqrt{L}},
\end{split}
\label{eq44}
\end{equation}
where
\begin{align}
\nonumber g_2 &:= 4 (1 + \exp(B)) \Bigg\{ \frac{\sqrt{2}}{K} \left[ \sqrt{\frac{13}{3}} \frac{\log L}{\sqrt{L}} \, (1 + \exp(B)) + B\right]  \\ 
\nonumber & \qquad \qquad \, + \left[ 4 \left(\frac{\log L}{\sqrt{L}} \, (1 + \exp(B)) + B\right)^2 - B^2 \right]^{1/2} \Bigg\}.
\end{align}
This concludes the proof of Theorem \ref{thm1}. 
\end{proof}

\section{Proof of Corollaries \ref{corollary1} and \ref{corollary2}}\label{Appendix_B}

\begin{proof}[Proof of Corollary \ref{corollary1}]
Proof of Corollary \ref{corollary1} follows the proof of Theorem \ref{thm1} closely, with the natural extension to the multivariate case. Following the proof of Theorem \ref{thm1}, the constants $g'_1$ and $g'_2$ in this case are given by:
\begin{equation}
\nonumber g'_1 := 8 \, (1 + \exp(B)) \times \left( B + 2(1 + \exp(B)) \frac{\log L}{\sqrt{L}} \right)
\end{equation}
\begin{equation}
\nonumber g'_2 := 4 (1 + \exp(B)) \sqrt{4 \left(\frac{\log L}{\sqrt{L}} \, (1 + \exp(B)) + B\right)^2 - B^2}
\end{equation}
\end{proof}

\begin{proof}[Proof of Corollary \ref{corollary2}]
As for Corollary \ref{corollary2}, we work under the technical assumptions of Theorems 1 and 2 in \cite{Das}. Following \cite{Das}, we assume that in Eq. (\ref{eq10}), $\mathbf{Q}_m = \mathbf{Q}$ for all $m$ in this proof, and that the EM algorithm finds estimates of $\mathbf{Q}$ and $\alpha$ close to their true value (for large enough $K$). Then, under Assumptions (1) and (2), we identify the effective observation $\widetilde{\mathbf{y}}_m^{(p)}$ corresponding to the $p^{\sf th}$ taper at window $m$ in \cite{Das} by the concatenation of $\nu_{w}^{(p)} \operatorname{logit}\big( \bar{n}_{(m-1)W + w,j}\big)$ for $w=1,2,\cdots,W$ and $j=1,2,\cdots,J$ in a vector of length $WJ$. We also assume, without loss of generality that $W = u N$, for some integer $u$. Then, we denote by $\bm{\Sigma}_\infty$ the steady state covariance of the backward smoother, and $\bm{\Lambda} := \alpha \bm{\Sigma}_\infty (\alpha^2 \bm{\Sigma}_\infty + \mathbf{Q})^{-1}$ and $\bm{\Gamma} = (\alpha^2 \bm{\Sigma}_\infty + \mathbf{Q})[\mathbf{I} - uW((\alpha^2 \bm{\Sigma}_\infty + \mathbf{Q})^{-1} + uW\mathbf{I})^{-1}]$, as in \cite{Das}. Note that these matrices are $NJ \times NJ$ in our case. Under the same assumptions \cite{Das}, we consider them to be diagonal with the $i^{\sf th}$ diagonal element being $\gamma_i$ and $\eta_i$ respectively. Then, following the proof of Theorem \ref{thm1} and those of Theorems 1 and 2 in \cite{Das}, it can be shown that the statement of the corollary holds with the constants:

\vspace{-5mm}
\begin{align}
\nonumber g''_1(\omega_n) & := \resizebox{0.6\columnwidth}{!}{$\displaystyle 8 (1 + \exp(B))\left( B + 2(1 + \exp(B)) \frac{\log L}{\sqrt{L}} \right)$}  \\
\nonumber &\times \eta_{(n-1)J + r} \, \eta_{(n-1)J + t}\sum_{s, s' = 1}^{M} \gamma_{(n-1)J + r}^{|s-m|} \gamma_{(n-1)J + t}^{|s'-m|},
\end{align}
\vspace{-2mm}
\begin{align}
\nonumber g''_2(\omega_n) &:= 4(1 + \exp(B)) \, \resizebox{0.5\columnwidth}{!}{$\displaystyle \sqrt{4 \left(\frac{\log L}{\sqrt{L}} \, (1 + \exp(B)) + B\right)^2 - B^2}$}\\
\nonumber & \quad \times \resizebox{0.7\columnwidth}{!}{$\displaystyle \eta_{(n-1)J + r} \, \eta_{(n-1)J + t}\sum_{s, s' = 1}^{M} \gamma_{(n-1)J + r}^{|s-m|} \gamma_{(n-1)J + t}^{|s'-m|} $},
\end{align}
\vspace{-4mm}

\noindent and
\vspace{-2mm}
\begin{align}
\nonumber \kappa_m(\omega_n) := \resizebox{0.8\columnwidth}{!}{$\displaystyle  \eta_{(n-1)J + r} \; \eta_{(n-1)J + t} \sum_{s, s' = 1}^{M} \gamma_{(n-1)J + r}^{|s-m|} \, \gamma_{(n-1)J + t}^{|s'-m|} \, \alpha^{|s-s'|}.$}
\end{align}
\vspace{-4mm}

\end{proof}

\vspace{-2mm}
\bibliographystyle{IEEEtran}
\bibliography{DynPPMTM}

\end{document}